\newcommand{\fstinst}{${}^1$}
\newcommand{\sndinst}{${}^2$}
\newcommand{\bothinst}{${}^{1,2}$}
\title{Homomorphisms and Minimality \\ for Enrich-by-Need \\
  Security Analysis}
\author{Daniel J. Dougherty\sndinst \and Joshua D. Guttman\bothinst
  \and John D. Ramsdell\fstinst \\[2mm]
  \fstinst The MITRE Corporation and \\
  \sndinst Worcester Polytechnic Institute}
\newcommand{\timestamp}{
  {\protect\small\sl\today\ --
    \ifnum\timehh<10 0\fi\number\timehh\,:\,
    \ifnum\timemm<10 0\fi\number\timemm}}
\newcommand{\pinch}{\vspace*{-5pt}}
\def\reflem#1{Lemma~\ref{#1}}
\def\refalg#1{Algorithm~\ref{#1}}
\newcommand{\eqdef}{\ensuremath{\stackrel{def}{=}}}
\newcommand{\set}[1]{\ensuremath{ \{ #1 \} }}
\newcommand{\dset}[2]{ \{ {#1} \mid {#2} \} }
\renewcommand{\phi}{\varphi}
\newcommand{\comp}{\circ}
\newcommand{\id}[1]{\mathop{\mathsf{id}}_{#1}\xspace}
\newcommand{\sig}{\ensuremath{\Sigma}\xspace}
\newcommand{\mdl}[1]{\ensuremath{\mathbb{{#1}}}\xspace}
\newcommand{\mM}{\mdl{M}}
\newcommand{\mN}{\mdl{N}}
\newcommand{\mK}{\mdl{K}}
\newcommand{\mA}{\mdl{A}}
\newcommand{\mB}{\mdl{B}}
\newcommand{\mC}{\mdl{C}}
\newcommand{\mF}{\mdl{F}}
\newcommand{\mP}{\mdl{P}}
\newcommand{\modclass}{\ensuremath{\mathcal{M}}\xspace}
\newcommand{\modcat}[1]{{\ensuremath{\mathcal{M}}_{#1}}\xspace}
\newcommand{\imodcat}[1]{\ensuremath{\mathcal{M}}^i_{#1}\xspace}
\newcommand{\modcatsig}{{{\ensuremath{\modcat}{\sig}}}\xspace}
\newcommand{\imodcatsig}{\ensuremath{\imodcat}{\sig}\xspace}
\newcommand{\ch}[1]{\ensuremath{\mathit{ch}_{#1}}}
\newcommand{\ich}[1]{\ensuremath{\mathit{ch}^{i}_{#1}}}
\newcommand{\diagram}[1]{\ensuremath{\Delta}_{#1}}
\newcommand{\posdiagram}[1]{\ensuremath{\Delta^{+}}_{#1}}
\newcommand{\flip}[1]{\ensuremath{\mathit{flip}_{#1}}}
\newcommand{\domain}[1]{\ensuremath{\delta}_{#1}\xspace}
\newcommand{\homleq}{\ensuremath{\precsim}\xspace}
\newcommand{\homle}{\ensuremath{\precnsim}\xspace}
\newcommand{\homeq}{\ensuremath{\thickapprox}\xspace}
\newcommand{\homgeq}{\ensuremath{\succsim}\xspace}
\newcommand{\ihomleq}{\ensuremath{\precsim^{i}}\xspace}
\newcommand{\ihomle}{\ensuremath{\precnsim^{i}}\xspace}
\newcommand{\ihomeq}{\ensuremath{\thickapprox^{i}}\xspace}
\newcommand{\ihomgeq}{\ensuremath{\succsim^{i}}\xspace}
\newcommand{\amin}{\ensuremath{a}-{minimal}\xspace}
\newcommand{\imin}{\ensuremath{i}-{minimal}\xspace}
\newcommand{\homTo}[1]{\ensuremath{\mathit{homTo}_{#1}}\xspace}
\newcommand{\homFrom}[1]{\ensuremath{\mathit{homFrom}_{#1}}\xspace}
\newcommand{\avoid}[1]{\ensuremath{\mathit{avoid}_{#1}}\xspace}
\newcommand{\rep}[1]{\ensuremath{\mathit{rep}_{#1}}\xspace}
\newcommand{\ehom}[1]{\ensuremath{\mathit{endo}_{#1}}\xspace}
\newcommand{\profile}{\ensuremath{\mathit{prf}}\xspace}
\newcommand{\prfbounds}{\ensuremath{\delta_{\profile}}\xspace}
\newcommand{\thy}[1]{\ensuremath{\mathcal{{#1}}}\xspace}
\newcommand{\thyT}{\thy{T}}
\newcommand{\thyG}{\thy{G}}
\newcommand{\Th}[1]{\ensuremath{\mathop{Th}(#1)}}
\newcommand{\cpsa}{\textsc{cpsa}}
\newcommand{\sas}{\textsc{sas}}
\newcommand{\lpa}{\textsc{lpa}}
\newcommand{\smt}{\textsc{smt}}
\newcommand{\smtlib}{\textsc{smt-lib}}
\newcommand{\sop}{\textsc{sop}}
\newcommand{\cn}[1]{\ensuremath{\operatorname{\mathsf{#1}}}}
\newcommand{\cnc}[1]{\ensuremath{\mathsf{#1}}}
\newcommand{\fn}[1]{\ensuremath{\operatorname{\mathit{#1}}}}
\newcommand{\thyr}[1]{\ensuremath{\operatorname{\mathrm{#1}}}}
\newcommand{\sdom}{\fn{Dom}}
\newcommand{\typ}{\mathbin:}
\newcommand{\typs}[1]{\typ\srt{#1}}
\newcommand{\seq}[1]{\ensuremath{\langle#1\rangle}}
\newcommand{\append}{\mathbin{\S}}
\newcommand{\enc}[2]{\ensuremath{\{\!|#1|\!\}_{#2}}}
\newcommand{\iv}[1]{\ensuremath{{#1}^{-1}}}
\newcommand{\inbnd}{\mathord -}
\newcommand{\outbnd}{\mathord +}
\newcommand{\srt}[1]{\ensuremath{\mathsf{#1}}}
\newcommand{\pos}{\ensuremath{\mathbb{Z}^+}}
\newcommand{\all}[1]{\mathop{\forall#1\mathpunct.}}
\newcommand{\some}[1]{\mathop{\exists#1\mathpunct.}}
\newcommand{\prefix}{\dagger}
\newcommand{\lang}{\Sigma}
\newcommand{\glang}{\ensuremath{\mathcal{L}}}
\newcommand{\mdlm}{\mathcal{M}}
\newcommand{\diag}{\mathcal{D}}
\newcommand{\kmdls}{\models_\Pi}
\newcommand{\alg}[1]{\ensuremath{\mathfrak#1}}
\newcommand{\alga}{\alg{A}}
\newcommand{\skel}{\mathsf{k}}
\newcommand{\truth}{\mathsf{true}}
\newcommand{\falsehood}{\mathsf{false}}
\newcommand{\nodes}{\ensuremath{\mathcal{N}}}
\newcommand{\evt}{\fn{evt}}
\newcommand{\msg}{\fn{msg}}
\newcommand{\iprec}{\ll}
\newcommand{\bundle}{\ensuremath{\mathcal{B}}}
\newcommand{\limp}{\ensuremath{\Longrightarrow}}
\newcommand{\qdot}{\,\mathbf{.}\;}
\theoremstyle{plain} % default: italicized
\newtheorem{theorem}{Theorem}
\newtheorem{lemma}[theorem]{Lemma}
\newtheorem{corollary}[theorem]{Corollary}
\theoremstyle{definition} % roman text
\newtheorem{definition}[theorem]{Definition}
\newtheorem{notation}[theorem]{Notation}
\newtheorem*{notation-un}{Notation}
\newtheorem*{remark-un}{Remark}
\newtheorem{algo}[theorem]{Algorithm}
\newcounter{stepno}
\newenvironment{listalg}{
  \begin{list}{\arabic{stepno}.}{
      \usecounter{stepno}\itemsep=-0.5ex\topsep=0.5ex}}{\end{list}}
\theoremstyle{remark} % roman text
\newtheorem{example}[theorem]{Example}
\newtheorem*{example-un}{Example}
\newtheorem*{examples-un}{Examples}
\newif\ifpubrel
\begin{document}
\maketitle
\ifpubrel
\thispagestyle{title}
\fi

\newpage
\tableofcontents
\newpage

\section{Introduction}

% \emph{Dan says: The introduction should introduce Razor very briefly,
%   just so references in Chapters 2 and 3 to ``Razor'' make sense.
%   Here's a cite:
%   \cite{razor15}.  \\
%   \phantom{\quad}Joshua says:  OK, I did that.  Thanks.  }

Cryptographic protocol analysis is a well-developed subject, with many
tools and rigorous techniques that can be used to determine what
confidentiality, authentication~(among
others,~\cite{cpsa16,DBLP:conf/fosad/EscobarMM07,Blanchet02,cremers2012operational}),
and indistinguishability
properties~(e.g.~\cite{Blanchet04,BlanchetAF08,ChadhaCCK16}) that a
protocol satisfies.

However, protocols are used to support security functionality required
by the applications that use those protocols.  Those applications may
have access control and authorization goals, and the real criterion
for whether a protocol is good enough for that usage is whether these
application-level requirements are always met.  For instance, in the
TLS resumption attacks~\cite{rfc5746},
cf.~\cite{BhargavanDFPS14,RoweEtAl2016}, the protocol did not allow
the server application to distinguish unauthenticated input at the
beginning of a data stream from subsequent authenticated input.  This
may lead to erroneous authorization decisions.

Another area in which application behavior affects protocol goals
concerns environmental assumptions.  For instance, some protocols fail
if the same long-term key is ever used by a principal when playing the
server role, and it has been used when playing a client role.
However, an application may ensure that no server ever needs to
execute the protocol in the client role at all.  This policy would
ensure that an otherwise weak protocol reliably supports the needs for
the application.

Logical Protocol Analysis is our term for combining a protocol
analyzer with these additional concerns, which we analyze via model
finding.  Our goal is to analyze cryptographic protocols that include
trust axioms that cannot be stated using the typical input to a
protocol analyzer such as {\cpsa}.

\paragraph{An Example:  DoorSEP.}  We turn next to a motivating
scenario, complete details of which may be found in
Section~\ref{sec:doorsep}.  We begin here by describing the protocol,
called the Door Simple Example Protocol (DoorSEP).  It is derived from
an expository protocol due to Bruno Blanchet~\cite{Blanchet08}, who
designed it to have a weakness.  We will illustrate conditions under
which, despite this weakness, the protocol achieves the needs of the
application.

Imagine a door~$D$ which is equipped with a badge reader, and a
person~$P$ who is equipped with a badge.  When the person swipes the
badge, the cryptographic protocol is executed.  The door and person
are each identified by the public part of an asymmetric key pair, with
\iv{D} and \iv{P} being the respective principal's private keys.  We
write \enc{M}{K} for the encryption of message~$M$ with key~$K$.  A
message~$M$ is signed by~$P$ by encrypting with $P$'s private key
($\enc{M}{\iv{P}}$).

The person initiates the exchange by creating a fresh symmetric
key~$K$, signing it, and sending it to the door encrypted with the
door's public key.  The door extracts the symmetric key after checking
the signature, freshly generates a token~$T$, and sends it to the
person encrypted with the symmetric key.  The person demonstrates they
are authorized to enter by decrypting the token and sending it as
plain text to the door.  DoorSEP may be expressed in Alice and Bob
notation:

\[
\begin{array}{l@{{}:{}}l}
  P\to D & \enc{\enc{K}{\iv{P}}}{D}\\
  D\to P & \enc{T}{K}\\
  P\to D & T.
\end{array}
\]

An analysis of DoorSEP by {\cpsa} shows an undesirable execution of
this protocol.  Assume the person's private key \iv{P} is
uncompromised and the door has received the token it sent out.  In
this situation, {\cpsa} will deduce that person~$P$ freshly created
the symmetric key~$K$.  However, there is nothing in this protocol to
ensure that the person meant to open door $D$.  If adversary~$A$
gets~$P$ to use compromised door~$D'$, the adversary can perform a
man-in-the-middle attack:

\[
\begin{array}{l@{{}:{}}l}
  P\to A & \enc{\enc{K}{\iv{P}}}{D'}\\
  A\to D & \enc{\enc{K}{\iv{P}}}{D}\\
  D\to A & \enc{T}{K}\\
  A\to D & T.
\end{array}
\]
Without additional assumptions, the door cannot authenticate the
person requesting entry.

But think about this situation:  Can we trust the person to swipe her
badge only in front of the door that matches the badge?  Can we ensure
that that door has an uncompromised private key?  If so, then the
adversary cannot exercise the flaw.

We regard this as a \emph{trust assumption}, and we can express it as
an axiom:
\begin{quote}
  If an uncompromised signing key ${\iv{P}}$ is used to prepare an
  instance of the first DoorSEP message, then its owning principal has
  ensured that the selected door $D$ has an uncompromised private key.
\end{quote}
The responsibility for ensuring the truth of this axiom may be split
between the person and the organization controlling the door.  The
person makes sure to swipe her badge only at legitimate doors of the
organziation's buildings.  The organization maintains a security
posture that protects the corresponding private keys.

Is DoorSEP good enough given the trust axiom?

To analyze this protocol with this trust assumption we rely on model
finding.  We provide a theory to Razor that generates a model that
specifies the man-in-the-middle attack.  To the theory, we add an
axiom that states that when a person generates a symmetric key, that
person will use an uncompromised key to encrypt its first message.
The axiom makes it so that the adversary cannot decrypt the message
sent by the person.

The generated model can be given to {\cpsa}.  It infers that the only
way the door can decrypt the person's message is if the person
encrypted the message using the door's public key.  Once that
inference is made, the door can conclude that the person sent the
messages expected in a run of this protocol.

DoorSEP was constructed with a flaw for expository purposes.  The
protocol can be repaired by including the door's public key in the
signed content, by making the first message
$\enc{\enc{K,D}{\iv{P}}}{D}$.

Flawed protocols are often deployed, and may be embedded in widely
used devices before the flaws are understood.  This example shows that
such protocols can still achieve desired security goals when used in a
restricted context.  If the context can be modeled using trust axioms,
Logical Protocol Analysis can be used to check whether the goals are
in fact met in the context of use.

\begin{sloppypar}
\paragraph{Protocols and theories.}  Hence, security conclusions
require protocol analysis combined with other properties, which we
will assume are characterized axiomatically by a theory $\thyG$.  In
the DoorSEP case, it is generated by the trust axiom.  We also regard
a protocol $\Pi$ as determining an axiomatic theory $\Th\Pi$, namely
the theory of $\Pi$'s executions, as $\Pi$ runs possibly in the
presence of a malicious adversary.  The models of this theory are runs
of the protocol.  Thus, we would like to understand the joint models
of $\thyG\cup\Th\Pi$, where of course these theories may share
vocabulary.
\end{sloppypar}

\paragraph{The enrich-by-need strategy.}  Indeed, our approach is to
construct \emph{minimal models} in a \emph{homomorphism order}.  We
refer to these minimal models as \emph{shapes}~\cite{Guttman10}.  The
shapes show all of the minimal, essentially different things that can
happen subject to $\thyG\cup\Th\Pi$:  every execution contains
instances---meaning homomorphic images---of the shapes.  This is
useful to the security analyst who can inspect the minimal models and
appraise whether they are compatible with his needs.  The analyst can
do this even without being able to explicitly state the key security
goals.  In the case in which $\thyG=\emptyset$, so that only $\Th\Pi$
matters, generating these shapes is the central functionality of
{\cpsa}~\cite{cpsa16}.

We call this approach to security analysis \emph{enrich-by-need},
since we build homomorphism-minimal models by rising stepwise in the
homomorphism order, gradually generating them all.  {\cpsa} does so
using a ``authentication test'' method, which yields a compact,
uniform way to generate the set of minimal models of the protocol
theory~\cite{Guttman10,cpsatheory11}.

Indeed, a further advantage arises in the case where there is a finite set of
finite shapes.  In that case, we can summarize them in a sentence,
called a \emph{shape analysis sentence} constructed as the disjunction
of their \emph{diagrams}~\cite{Guttman14,Ramsdell12}.  The diagram of
a finite model is (roughly) the conjunction of the atomic formulas
true in it.  The shape analysis sentence is thus true in all of the
shapes.  Moreover, its syntactic form ensures that its truth will be
preserved by homomorphisms.  Thus, it will be true in \emph{all}
models of $\thyG\cup\Th\Pi$.  Indeed, no strictly stronger formula can
be true in all the models.  We regard the shape analysis as a security
goal achieved by $\thyG\cup\Th\Pi$.

Thus, finding a finite set of finite shapes determines a strongest
security goal that the system achieves.

We already have a special tool, called {\cpsa}~\cite{cpsa17}, that
computes the shapes and their sentences for a protocol $\Pi$ acting
alone.  It uses optimized algorithms that we have proved correct for
protocol analysis~\cite{Guttman10,cpsatheory11}.  Thus, we need to
extend it so that it can cooperate with another tool to adapt its
results to provide models of the whole theory $\thyG\cup\Th\Pi$.  We
effectively split $\Th\Pi$ into two parts, a hard part $T_h$ and an
easy part $T_e$.  Only {\cpsa} will handle the hard part.

Our strategy is to program Z3~\cite{DeMoura08} to look for minimal
models of $\thyG\cup T_e$ that extend a fragment of a model.  When the
resulting model $\mA$ contains additional behavior of $\Pi$, we return
to {\cpsa} to handle the hard part $T_h$, enriching $\mA$ with some
possible executions.  We then return these extensions to Z3.  If this
process terminates, we have a minimal joint model.  By iterating our
search, we obtain a covering set of minimal joint models.

The program that orchestrates this use of Z3 is called \emph{Razor}.
It adapts the ideas of an earlier program of the same
name~\cite{razor15}.

\paragraph{Contributions.}  This report has two goals.  First, we
define and justify the methods that the new Razor uses to drive Z3 to
generate homomorphism-minimal models of a given theory.  These
homomorphisms are not necessarily embeddings; that is, a homomorphism  to construct %
may map distinct values in its source model to the same value in its
target model.  %
To begin with, we need a method to construct, from a model $\mA$, a set
of sentences $\avoid{\mA}$, true in precisely those models \mB such
that there is no homomorphism from $\mA$ to $\mB$.  %
We also need a method to construct, from a model \mA, a set of sentences
$\homTo{\mA}$, true in precisely those models \mB such that there
\emph{is} a homomorphism from $\mB$ to $\mA$.
% To begin with, we need a method, given a model $\mA$
% of theory $T$, to obtain from Z3 another $T$-model $\mB$ such that
% there is no homomrphism from $\mA$ to $\mB$.  We also need a
% method, given two $T$-models $\mA,\mB$, to determine a
% homomorphism $H\colon\mA\rightarrow\mB$ or else to report that no
% such $H$ exists.
We show how to use these two resources to compute
a set of minimal models that covers all of the models; this method is
codified in Razor.

Second, we develop a particular architecture for coordinating Razor
and {\cpsa}.  In this architecture, Razor handles all aspects of
$\thyG\cup\Th\Pi$ \emph{except} that it does not enrich a fragmentary
execution of $\Pi$ to obtain its shapes, i.e.~the minimal executions
that are its images.  Instead, we generate an input to {\cpsa} that
contains the substructure $\mA_0$ containing only protocol behavior.
{\cpsa} computes the shapes and extracts the strongest security goal
that applies to $\mA_0$.  It returns this additional information to
Razor, which then iterates.  We call this cooperative architecture
{\lpa} for \emph{Logical Protocol Analysis}.

\paragraph{Conclusions.}  We draw two main conclusions.  First, Razor
uses Z3 effectively to extract minimal models of a variety of
theories.  This is particularly true if the theories do not contain
many nested universal quantifiers.  Moreover, the {\lpa} coordination
between {\cpsa} and Razor is sound.

Second, when Razor and {\cpsa} are used together as in {\lpa}, Z3 must
handle theories with a fairly large number of nested universal
quantifiers.  Therefore this method is practical in its current form
only for quite small examples.  Refining the approach may enable us to
generate theories---possibly quantifier-free theories---that are
smaller and more easily digested by Z3.

\paragraph{Structure of this report.}  We organize the report into two
main chapters.  Chapter~\ref{sec:modulo:strands} introduces the
theories $\Th\Pi$, explains the way that {\lpa} marshals Razor and
{\cpsa} together.  Chapter~\ref{sec:finding minimal models} describes
Razor's strategies to use Z3 for finding minimal models, relative to a
given theory $T$.  Chapter~\ref{sec:concl} summarizes and concludes.

Within Chapter~\ref{sec:modulo:strands}, we introduce strand space
theory and give its axiomatic presentation in Section~\ref{sec:strand
  space theory}.  In Section~\ref{sec:doorsep}, we introduce an
example that uses Blanchet's Simple Example Protocol as a tool in an
authorization decision, namely whether to open a locked door.  The
protocol is chosen so that it would not necessarily be sound.
However, the analysis shows that the protocol is good enough given an
additional application-specific trust assumption.  This is the
assumption that, every time an authorized principal interacts with a
door, that door complies with the protocol and preserves the secrecy
of the values it is given.  This illustrates how protocol analysis may
be crafted to an application-specific context.

Within Chapter~\ref{sec:finding minimal models}, we lay the
foundations in Section~\ref{sec:foundations}, focusing on core models.
These are canonical homomorphism-minimal submodels, which have
embeddings into their homomorphic images.  We are, however, more
interested in homomorphisms that may not be embeddings; we introduce
this notion of minimality in Section~\ref{sec:minimality}.  This
section shows how to compute minimality models in either the embedding
sense or the sense of all homomorphisms.  We then turn from the theory
to the implementation considerations of working with the SMT solver Z3
or other SMT2-lib-compliant solvers.  Section~\ref{sec:results} gives
numerical results for DoorSEP and some other small examples.

\section{Model Finding Modulo Strand Space Theory}
\label{sec:modulo:strands}

This chapter shows how to use the model finders presented in
Chapter~\ref{sec:finding minimal models} and Strand
Spaces~\cite{ThayerHerzogGuttman99} to analyze cryptographic
protocols.  The implications of trust policies expressed in
first-order logic can be analyzed by studying their models.  A trust
policy that includes a theory about a cryptographic protocol allows
one to determine the impact of the policy on the execution of a
protocol.  However, deducing protocol executions is not something that
can be efficiently done within an {\smt} solver.  An external, finely
tuned tool is called for.

The Cryptographic Protocol Shapes Analyzer~\cite{cpsa17} (\cpsa) is a
tool that can be used to determine if a protocol achieves
authentication and secrecy goals.  It performs symbolic cryptographic
analysis based on the Dolev-Yao adversary model~\cite{DolevYao83} and
Strand Spaces.  Determining if a protocol satisfies a goal is an
undecidable problem, however, {\cpsa} appears to have a performance
advantage over other tools by using forward reasoning based on solving
authentication tests~\cite{GuttmanThayer02}.

{\cpsa} begins an analysis with a description of a protocol and an
initial scenario.  The initial scenario is a partial description of
the execution of a protocol.  If {\cpsa} terminates, it computes a
description of all of the executions of the protocol consistent with
the initial scenario.  For example, if it is assumed that one role of
a protocol runs to completion and {\cpsa} terminates, {\cpsa} will
determine what other roles must have executed.

Associated with each {\cpsa} protocol~$\Pi$ is a first-order
language~$\glang(\Pi)$ used to specify security
goals~\cite{Guttman14}.  The language can be used to exchange
information between {\cpsa} and an {\smt} solver.

A security goal is a sentence with a special form.  It is a
universally quantified implication.  Its hypothesis is a conjunction
of atomic formulas.  Its conclusion is a disjunction of existentially
quantified conjunctions of atomic formulas.  Security goals can be
used to express authentication and secrecy goals.

{\cpsa} describes a set of executions with an object called a
skeleton.  A skeleton that explicitly describes all of the
non-adversarial behavior in each execution is called a realized
skeleton.  Skeletons are presented in Section~\ref{sec:strand spaces}.

A Tarski style semantics, one that uses a skeleton as a model for a
sentence, is defined for each goal language.  A security goal is
achieved by a protocol if every realized skeleton models the goal.
The goal language is presented in Section~\ref{sec:protocol formulas}.

For use with {\smt} solvers, there is a theory~$T_\Pi$ for
protocol~$\Pi$.  When this theory is included, models restricted
to~$\glang(\Pi)$ characterize a skeleton of~$\Pi$.  The theory
associated with a protocol is presented in Section~\ref{sec:skeleton
  axioms}.

The goal language used by {\lpa} is strand-oriented as opposed to
being node-oriented.  The distinction is presented in
Section~\ref{sec:strand vs. node} along with the motivation for
choosing a strand-oriented language.

There is a special security goal that can be extracted from the
results of a run of {\cpsa}.  A Shape Analysis Sentence
(\sas)~\cite{Ramsdell12} encodes everything that has been learned
about the protocol from a {\cpsa} analysis starting with a given
initial scenario.  A {\sas} is used to import the results of a {\cpsa}
analysis into the {\smt} solver.

\begin{figure}
  \begin{center}
    \includegraphics{arch-0.mps}
  \end{center}
  \caption{{\lpa} Architecture}\label{fig:arch}
\end{figure}

The architecture for combining an {\smt} solver with {\cpsa}, called
the Logical Protocol Analyzer (\lpa), is displayed in
Figure~\ref{fig:arch}.  Theories are expressed using \smtlib~2.5
syntax.  An analysis begins with a {\cpsa} protocol~$\Pi$ and an
initial theory~$T_0$.  The initial theory contains a specification of
the trust policy and a description of the initial scenario of the
protocol as a collection of sentences in~$\glang^+(\Pi)$, an
extension of~$\glang(\Pi)$.

The program \texttt{prot2smt2} extracts the protocol theory~$T_\Pi$.
The initial theory is appended to the protocol theory to form the
first theory~$T_1$ to be analyzed by Razor, the model finder.  A
skeleton is extracted from each model.  If the skeleton is realized,
the model describes the impact of the trust policy on complete
executions of the protocol.  If the skeleton is not realized, it is
used as the initial scenario for {\cpsa}.  The results of {\cpsa} is
turned into a {\sas} and added to the current theory for further
analysis.  The {\lpa} algorithm is presented in Section~\ref{sec:lpa}.
An example of the use of {\lpa} is in Section~\ref{sec:doorsep}.

The performance of {\lpa} is not good.  Model finding consumes a large
amount of \textsc{cpu} time, even for small problems.
Section~\ref{sec:results} presents the results from running our test
suite and an analysis of {\lpa}'s performance issues.

\subsection{Strand Space Theory}\label{sec:strand space theory}

This section describes the theory behind the strand-oriented
implementation of~\lpa. Unlike~\cite{Guttman14}, this paper uses
many-sorted first order logic, which is a better match for existing
software tools.  This paper incorporates much from~\cite{Ramsdell12}.
Unlike~\cite{Ramsdell12}, this paper uses one-based sequence indexing.

\paragraph{Notation.}

A finite sequence is a function from an initial segment of the
positive integers (\pos).  The length of a sequence~$X$ is~$|X|$, and
sometimes we write sequence~$X=\seq{X(1),\ldots, X(n)}$ for $n=|X|$.
The prefix of sequence~$X$ of length~$n$ is~$X\prefix n$.

\subsubsection{Message Algebras}\label{sec:alg}

\begin{figure}[!ht]
\[\begin{array}{ll@{{}\typ{}}ll}
\mbox{Sorts:}&\multicolumn{3}{l}{\mbox{$\srt{M}$, $\srt{T}$,
    $\srt{S}$, $\srt{A}$}}\\
\mbox{Functions:}&(\cdot,\cdot)&\srt{M}\times\srt{M}\to\srt{M}
&\mbox{Pairing}\\[1ex]
&\enc{\cdot}{(\cdot)}^S&\srt{M}\times\srt{S}\to\srt{M}
&\mbox{Symmetric encryption}\\[1ex]
&\enc{\cdot}{(\cdot)}^A&\srt{M}\times\srt{A}\to\srt{M}
&\mbox{Asymmetric encryption}\\[1ex] &(\cdot)^{-1}&\srt{A}\to\srt{A}
&\mbox{Asymmetric key inverse}\\ &\cnc{tt}&\srt{T}\to\srt{M}
&\mbox{Text inclusion}\\ &\cnc{sk}&\srt{S}\to\srt{M} &\mbox{Symmetric
  key inclusion}\\ &\cnc{ak}&\srt{A}\to\srt{M} &\mbox{Asymmetric key
  inclusion}\\
\mbox{Equation:}&\multicolumn{3}{l}{(x^{-1})^{-1}=x\mbox{ for $x:\srt{A}$}}
\end{array}\]
\caption{Simple Crypto Algebra Signature}\label{fig:signature}
\end{figure}

Figure~\ref{fig:signature} shows the simplification of the {\cpsa}
message algebra signature used by {\lpa}.  Sort~$\srt{M}$ is the sort
of messages.  The other sorts, sort~$\srt{A}$ (asymmetric keys),
sort~$\srt{S}$ (symmetric keys), and sort~$\srt{T}$ (text), are called
\emph{basic sorts}.

A message constructed by applying a term of a basic sort to an
inclusion function is call a \emph{basic value}.  Messages are
generated from the basic values using encryption
$\enc{\cdot}{(\cdot)}^{\{S,A\}}$ and pairing $(\cdot,\cdot)$, where
the comma function is right associative and parentheses are
omitted when the context permits.

A set of variables~$X$ is \emph{well-sorted} if for each~$x\in X$, $x$
has a unique sort~$S$.  Let $x\typ S$ assert that the sort of~$x$
is~$S$ in~$X$.  Suppose~$X$ and~$Y$ are well-sorted and contain~$x$.
The sort of~$x$ in~$X$ need not agree with the sort of~$x$ in~$Y$.

Let~$\alga(X)$ be the quotient term algebra generated by a
set of well-sorted variables~$X$.  We often leave the set of variables
implicit, and refer to the carrier set for sort~$S$ by $\alga_S$.

A message~$t_1$ is \emph{carried by}~$t_2$, written $t_1\sqsubseteq
t_2$ if~$t_1$ can be derived from~$t_2$ given the right set of keys,
that is $\sqsubseteq$ is the smallest reflexive, transitive relation
such that $t_1\sqsubseteq t_1$, $t_1\sqsubseteq (t_1, t_2)$,
$t_2\sqsubseteq (t_1, t_2)$, and $t_1\sqsubseteq\enc{t_1}{t_2}$.

\iffalse
The carried by relation is sometimes called ingredient.  This is a
confusing name because by the accepted definition of the word, a key
is just as much an ingredient of an encryption as is its plain text.
Besides, the name carried by has been in use in {\cpsa} documentation
long before the name ingredient ever appeared in print.  In this
document, we use standard {\cpsa} terminology.
\fi

\begin{figure}
  \[\begin{array}{r@{{}\vdash{}}l@{\qquad}r@{{}\vdash{}}l}
  t&t&t_1,\cn{sk}(t_2)&\enc{t_1}{t_2}^S\\
  t_1,t_2&(t_1,t_2)&\enc{t_1}{t_2}^S,\cn{sk}(t_2)&t_1\\
  (t_1,t_2)&t_1&t_1,\cn{ak}(t_2)&\enc{t_1}{t_2}^A\\
  (t_1,t_2)&t_2&\enc{t_1}{t_2}^S,\cn{ak}(t_2^{-1})&t_1
  \end{array}\]
  \caption{Adversary Derivability Relation}\label{fig:derivability}
\end{figure}

Adversary behavior is modeled via a derivability relation.  Given a
set of messages~$S$, message~$t$ is derivable, written~$S\vdash t$,
when there is a derivation using the rules in
Figure~\ref{fig:derivability}.  These rules encode the Dolev-Yao
model~\cite{DolevYao83}.

\subsubsection{Strand Spaces}\label{sec:strand spaces}

A run of a protocol is viewed as an exchange of messages by a finite
set of local sessions of the protocol.  Each local session is called a
strand.  A \emph{strand} is a finite sequence of events.  An \emph{event} is
either a message transmission or a reception.  Outbound message
$t\in\alga_\srt{M}$ is written as~$\outbnd t$, and inbound message~$t$
is written as~$\inbnd t$.   A message \emph{originates} in a strand
if it is carried by some event and the first event in which it is
carried is outbound.  A message is \emph{acquired} in a
strand if it is carried by some event and the first event in which it is
carried is inbound.

A \emph{strand space}~$\Theta$ is a finite sequence of strands.  A
message that originates in exactly one strand of~$\Theta$ is
\emph{uniquely originating}, and represents a freshly chosen value.  A
message is \emph{mentioned} in~$\Theta$ if it occurs in a strand
of~$\Theta$, or if it is an asymmetric key, its inverse occurs in a
strand of~$\Theta$.  A message that is mentioned but originates
nowhere in~$\Theta$ is \emph{non-originating}, and often represents an
uncompromised key.

A node identifies an event in a strand space.  A \emph{node} is a pair
of positive integers, and the event associated with node $(s,i)$ is
$\evt_\Theta(s,i) =\Theta(s)(i)$.  We sometimes omit the strand space
when it is obvious from the context.  The set of nodes of strand
space~$\Theta$ is
\[\nodes(\Theta)=\{(s,i)\mid s\in\sdom(\Theta),
i\in\sdom(\Theta(s))\}.\]

The \emph{strand succession} relation is the binary relation
${}\Rightarrow{}\colon\nodes(\Theta)\times\nodes(\Theta)$, such that
\[(s_1,i_1)\Rightarrow(s_2,i_2)\mbox{ iff }s_1=s_2\mbox{ and }i_1+1=i_2.\]

An execution is called a bundle.  A \emph{bundle}
$\bundle(\Theta,\to)$ is a strand space~$\Theta$ and a binary
communication relation
${}\to{}\colon\nodes(\Theta)\times\nodes(\Theta)$, such that
\begin{enumerate}
\item the graph with $\nodes(\Theta)$ as vertices and
  $\Rightarrow\cup\to$ as edges is acyclic;
\item if $n_0\rightarrow n_1$, then $\evt(n_0)=\outbnd t$
  and~$\evt(n_1)=\inbnd t$ for some~$t$; and
\item for each reception node~$n_1$, there is a unique transmission
  node~$n_0$ with $n_0\rightarrow n_1$.
\end{enumerate}
The node precedence relation of~\bundle,
${\prec_\bundle}=(\Rightarrow\cup\to)^+$, is a strict partial ordering
of nodes and represents the causal relation between events that occur
at nodes in~\bundle.  In a bundle, a strand that is an instance of a
role in Figure~\ref{fig:pen} is called a \emph{penetrator strand},
and the remaining strands are \emph{regular}.  In what follows, we
assume all regular strands precede penetrator strands in the sequence
of strands~$\Theta$, that is, if $\Theta_s$ is regular and
$\Theta_{s'}$ is a penetrator strand, then $s<s'$.

\begin{figure}
$$\begin{array}{lll}
\seq{\inbnd x,\inbnd y,\outbnd(x, y)}&
\seq{\inbnd(x, y),\outbnd x,\outbnd y}&
\mbox{Pair}\\
\seq{\inbnd x,\inbnd\cn{sk}(y),\outbnd\enc{x}{y}}&
\seq{\inbnd\enc{x}{y},\inbnd \cn{sk}(y),\outbnd x}&
\mbox{Symmetric}\\
\seq{\inbnd x,\inbnd\cn{ak}(y),\outbnd\enc{x}{y}}&
\seq{\inbnd\enc{x}{y},\inbnd \cn{ak}(y^{-1}),\outbnd x}&
\mbox{Asymmetric}\\
\seq{\outbnd\cn{tt}(x)}\quad
\seq{\outbnd\cn{sk}(x)}&
\seq{\outbnd\cn{ak}(x)}&
\mbox{Create}
\end{array}$$
\caption{Penetrator Roles}\label{fig:pen}
\end{figure}

A skeleton represents all or part of the regular portion of an
execution.  A \emph{skeleton} $k=\skel_X(\Theta,\prec,\nu,\upsilon)$,
where~$X$ is a set of well-sorted variables used to generate the
message algebra used by~$\Theta$,~$\prec$ is a strict partial ordering
of the nodes in $\Theta$,~$\nu$ is a set of basic values mentioned
in~$\Theta$, none of which is carried in a strand in~$\Theta$,
and~$\upsilon$ is a set of pairs consisting of a basic value and a
node.  For each $(t,n)\in\upsilon$, $t$ originates at~$n$ in $\Theta$
and at no other node.  In addition,~$\prec$ must order the node for
each event that receives a uniquely originating basic value after the
node of its transmission, so as to model the idea that the basic value
represents a value freshly generated when it is transmitted.

Skeleton $k=\skel_X(\Theta,\prec,\nu,\upsilon)$ is the \emph{skeleton
  of} bundle $\bundle(\Theta',\to)$ if
\begin{enumerate}
\item $\Theta=\Theta'\prefix n$, where $n$ is the number of regular
  strands in $\Theta'$;
\item $\prec$ is the restriction of $\prec_\bundle$ to
  $\nodes(\Theta)$;
\item $\nu$ is the set of non-originating basic values in $\Theta$;
  and
\item $\upsilon$ is the set of uniquely originating basic values and
  their node of origination in $\Theta$.
\end{enumerate}

Let $k=\skel_X(\Theta,\prec,\nu,\upsilon)$ and
$k'=\skel_{X'}(\Theta',\prec',\nu',\upsilon')$ be skeletons.  There is
a \emph{skeleton homomorphism} $(\phi,\sigma)\colon k\mapsto k'$
if~$\phi$ and~$\sigma$ are maps with the following properties:
\begin{enumerate}
\item $\phi$ maps strands of~$k$ into those of~$k'$, and nodes as
  $\phi(s,i)=(\phi(s),i)$, that is $\phi$ is in
  $\sdom(\Theta)\to\sdom(\Theta')$;
\item $\sigma\colon\alga(X)\to\alga(X')$ is a message algebra homomorphism;
\item $n\in\nodes(\Theta)$ implies
  $\sigma(\evt_\Theta(n))=\evt_{\Theta'}(\phi(n))$;
\item $n_0\prec n_1$ implies $\phi(n_0)\prec'\phi(n_1)$;
\item $\sigma(\nu)\subseteq \nu'$;
\item $(t,n)\in\upsilon$ implies $(\sigma(t),\phi(n))\in\upsilon'$.
\end{enumerate}
Skeleton~$k$ \emph{covers} bundle~{\bundle} if there exists a
homomorphism from~$k$ to the skeleton of~\bundle.  Skeleton~$k$ is
\emph{realized} iff there is an injective homomorphism to the
skeleton of some bundle that preserves the length of strands.

\iffalse
% What happened to uniquely originating values?
Let $B(\Theta)$ be the set of basic values mentioned in~$\Theta$.
The messages in skeleton~$k$ that are available before node~$n$ is
\[S(n)=\{t\mid n_1\in\fn{nodes}(\Theta)\land n_1\prec
n\land\evt(n_1)=+t\}\cup B(\Theta)\setminus\nu.\]
Skeleton~$k$ is \emph{derivable} iff the message at
every reception node~$n$ in~$k$ is derivable from the messages
available before~$n$, that is $S(n)\vdash\msg(n)$.

\begin{conj}
  Skeleton~$k$ is realized iff $k$ is derivable.
\end{conj}
\fi

A protocol~$\Pi$ is a strand space with restrictions and a theory.
The details of the theory of~$\Pi$, written~$T_\Pi$, will be presented
in the next section.  A strand~$i$ of~$\Pi$, written $\Pi_i$ is called
a role.  A \emph{role} is a strand where every variable of sort
message that occurs in the strand is acquired.  Strand~$s$ is an
\emph{instance} of role~$\Pi_i$ if~$s$ is a prefix of the result of
applying some substitution~$\sigma$ to~$\Pi_i$.  Every protocol
contains the listener role $\seq{\inbnd x,\outbnd x}$ for
$x\typ\srt{M}$.

Skeleton~$k$ is a skeleton of protocol~$\Pi$ if each strand
in~$\Theta$ is an instance of a protocol role, and~$k$ models
theory~$T_\Pi$ as defined in the next section.

\subsubsection{Protocol Formulas}\label{sec:protocol formulas}

The signature~$\lang(\Pi)$ used for strand-oriented protocol (\sop)
formulas includes of the sorts and functions in the underlying message
algebra.  There are two additional sorts:~$\srt{D}$, the sort for
strands and~\srt{I}, the sort for indices.  A node is never explicitly
represented, instead it is represented as a pair consisting of a
strand and an index.

{\sop} formulas make use of protocol specific and protocol independent
predicates.  For each role $\Pi_i$, there is a protocol specific
binary strand length predicate $\Pi_i:\srt{D}\times\srt{I}$.  For each
role~$\Pi_i$ and variable~$x:S$ that occurs in $\Pi_i$, there is a
protocol specific binary strand parameter predicate
$\Pi_i^x:\srt{D}\times S$.  The protocol independent unary predicates
are $\mathsf{non}B: B$ for each basic sort~$B\in\{\srt{T},
\srt{S},\srt{A}\}$.  The ternary protocol independent predicates are
$\mathsf{uniqAt}B: B\times\srt{D}\times\srt{I}$.  The quaternary
protocol independent predicate is
$\mathsf{prec}:\srt{D}\times\srt{I}\times\srt{D}\times\srt{I}$.
Equality is part of the signature.  Finally, there is a constant of
sort~\srt{I} for each index in the longest role of the protocol.
Thus, suppose the longest role has length~$n$, then $1:\srt{I},
2:\srt{I},\ldots,n:\srt{I}$ are constants in the signature.

\paragraph{Semantics of Protocol Formulas.}

Let $k=\skel_X(\Theta,\prec,\nu,\upsilon)$.  The universe of
discourse~$\alg{D}$ contains a set for each sort in $\lang(\Pi)$.  For
sort~\srt{D}, $\alg{D}_\srt{D}$ is the domain of~$\Theta$.
$\alg{D}_\srt{I}$ is the set of integers that correspond to the
constants of sort~\srt{I}.  The universe of discourse~$\alg{D}_S$ for
each algebra sort~$S$ is~$\alga(X)_S$.

When formula~$\Phi$ is satisfied in skeleton~$k$ with variable
assignment $\alpha\typ Y\to \alg{D}$, we write
$k,\alpha\kmdls\Phi$.  We write~$\bar\alpha$ when~$\alpha$ is
extended to terms in the obvious way.  When sentence~$\Gamma$ is
modeled by skeleton~$k$, we write $k\kmdls\Gamma$.

\begin{itemize}
\item $k,\alpha\kmdls\Pi_i(y,j)$ iff for some~$s=\alpha(y)$
  and~$\sigma$, $s$ is in the domain of~$\Theta$, and
  \[\Theta_s\prefix\bar\alpha(j)=\sigma(\Pi_i\prefix\bar\alpha(j)).\]

\item $k,\alpha\kmdls\Pi^x_i(y,t)$ iff
   for some~$s=\alpha(y)$ and~$\sigma$,
  $s$ is in the domain of~$\Theta$,
  $x$ first occurs in $\Pi_i$ at $j$, and for some~$\sigma$ with
  $\sigma(x)=\bar\alpha(t)$,
  \[\Theta_s\prefix j=\sigma(\Pi_i\prefix j).\]
\end{itemize}

The interpretation of the protocol independent predicates is
straightforward.
\begin{itemize}
\item $k,\alpha\kmdls\cn{prec}(y_d,y_i,z_d,z_i)$ iff
  $(\alpha(y_d),\bar\alpha(y_i))\prec(\alpha(z_d),\bar\alpha(z_i))$.
\item $k,\alpha\kmdls\cn{nonT}(t)$ iff $\cn{tt}(\bar\alpha(t))\in\nu$.
\item $k,\alpha\kmdls\cn{nonS}(t)$ iff $\cn{sk}(\bar\alpha(t))\in\nu$.
\item $k,\alpha\kmdls\cn{nonA}(t)$ iff $\cn{ak}(\bar\alpha(t))\in\nu$.
\item $k,\alpha\kmdls\cn{uniqAtT}(t,y_d,y_i)$ iff
  $(\cn{tt}(\bar\alpha(t)),(\alpha(y_d),\bar\alpha(y_i)))\in\upsilon$.
\item $k,\alpha\kmdls\cn{uniqAtS}(t,y_d,y_i)$ iff
  $(\cn{sk}(\bar\alpha(t)),(\alpha(y_d),\bar\alpha(y_i)))\in\upsilon$.
\item $k,\alpha\kmdls\cn{uniqAtA}(t,y_d,y_i)$ iff
  $(\cn{ak}(\bar\alpha(t)),(\alpha(y_d),\bar\alpha(y_i)))\in\upsilon$.
\item $k,\alpha\kmdls y=z$ iff $\bar\alpha(y)=\bar\alpha(z)$.
\end{itemize}

\paragraph{Associated Protocol Theories.}

In addition to a sequence of roles, protocol~$\Pi$ has an associated
theory~$T_\Pi$.  For example, the axiom
\[\all{x\typ\srt{D},b\typ\srt{T}}\Pi_1(x,1)\land\Pi_1^n(x,b)
\supset\cn{uniqAtT}(b,x,1)\] states that the~$n$ parameter of an
instance of role~$\Pi_1$ always uniquely originates at its first node.

\subsubsection{Logical Protocol Analysis}\label{sec:lpa}

%\nocite{Sebastiani07}

The \emph{protocol theory} of protocol~$\Pi$, $\thyr{Tr}(\Pi)$ is:
\[\thyr{Tr}(\Pi)=\Upsilon(\Pi)\cup\{\Gamma\mid\mbox{for all realized
  $k$, $k\kmdls\Gamma$}\},\]
where $\Upsilon(\Pi)$ is the theory of $\Pi$-skeletons presented in
Section~\ref{sec:skeleton axioms}.

For each protocol~$\Pi$, ideally one would like to perform model
finding modulo~$\thyr{Th}(\Pi)$.  The {\lpa} program performs our
approximation.

The inputs to {\lpa} are a protocol~$\Pi$ and an initial theory~$T_0$.
The signature of the initial theory, $\bar\lang(\Pi)$ may extend the
signature of protocol formulas with new sorts, functions, and
predicates.  The additional functions and predicates are treated as
uninterpreted symbols.

Suppose~$\mdlm$ models~$\Gamma$ ($\mdlm\models\Gamma$).  We denote by
$\mdlm\Downarrow\lang(\Pi)$ the \emph{reduct} of~$\mdlm$ to $\lang(\Pi)$
by restricting it to interpret only symbols in~$\lang(\Pi)$.  Assume
each element in the domain of model~$\mdlm$ is distinct from a function
in~$\bar\lang(\Pi)$.  Let $\bar\lang(\Pi)^+$ be the extension of
$\bar\lang(\Pi)$ in which each element in the model is a constant of
the appropriate sort.  The \emph{positive diagram} of~$\mdlm$,
$\diag(\mdlm)$, is the set of all of the atomic $\bar\lang(\Pi)^+$
sentences modeled by~$\mdlm$.

Given a set of atomic $\lang(\Pi)^+$ $\Phi$ sentences,
Guttman~\cite[Section~4.3]{Guttman14} describes when and how one can
extract a skeleton $k=\fn{cs}(\Phi)$ that is characterized by~$\Phi$.

{\lpa} starts with theory $T_1$.  It is the union of initial
theory~$T_0$, $T_\Pi$, and $\Upsilon(\Pi)$, the $\Pi$-skeleton
theory presented in Section~\ref{sec:skeleton axioms}.

\begin{algo}\label{alg:lpa}
{\lpa} performs the following procedure to output a set of models.
\vskip2ex
\noindent $\lpa(T){}\equiv{}$
\begin{enumerate}
\item Let $\mdlm_i$ be a set of minimal models of~$T$.  If
  $T$ is unsatisfiable, abort.

\item For each $\mdlm_i$ do:
\begin{enumerate}
\item $\Gamma\gets\diag(\mdlm_i\Downarrow\lang(\Pi))$.  If $\Gamma$ is
  not role specific, signal initial theory error and abort.

\item If $\fn{cs}(\Gamma)$ is realized, output $\mdlm_i$.

\item Use $\fn{cs}(\Gamma)$ as the point-of-view for {\cpsa} and compute the
  resulting shape analysis sentence~$\Delta$ using the algorithm
  in~\cite{Ramsdell12}.

\item $\lpa(T\cup\{\Delta\})$.
\end{enumerate}
\end{enumerate}
\end{algo}

\subsubsection{Skeleton Axioms}\label{sec:skeleton axioms}

This section presents the $\Pi$-skeleton theory, $\Upsilon(\Pi)$, used
to form an initial theory by {\lpa}.  For performance reasons, the
signature used omits the message sort~$\srt{M}$ along with all message
algebra functions that refer to that sort.  The only message algebra
function that remains is asymmetric key inverse~$(\cdot)^{-1}$.  This
simplification of the protocol signature was necessary because model
finders attempt to create an infinite domain for the message sort
using the obvious specification.

\begin{sloppypar}
There are three new binary predicates introduced in this section,
predicates $\mathsf{carriedAt}B: B\times\srt{D}\times\srt{I}$, for
each basic sort~$B\in\{\srt{T}, \srt{S},\srt{A}\}$.  Formula
$\mathsf{carriedAtT}(b,x,i)$ asserts that text~$b$ is first carried in
strand~$x$ by the message at position~$i$.  Finally, there is an index
precedence predicate ${\iprec}\typ\srt{I}\times\srt{I}$.
\end{sloppypar}

In what follows, $\cn{prec}(\cdot,\cdot,\cdot,\cdot)$ is written as
$(\cdot,\cdot)\prec(\cdot,\cdot)$.

\begin{enumerate}
\item All message functions are injective.
  \[\all{mn\typ\srt{A}}m^{-1}=n^{-1}\supset m=n.\]

\item For $k\typ\srt{A}$, $k^{-1}\neq k$ and $(k^{-1})^{-1}=k$.

\item Index constants enumerate the sort.  For all $1\leq i,j \leq n$,
  \[\all{x\typ\srt{I}}\some{i}x=\cnc{c}_i\mbox{ and
  $\cnc{c}_i\neq\cnc{c}_j$ when $i\neq j$.}\]

\item Index constants are linearly ordered.  For all $1\leq i,j \leq n$,
  \[\cnc{c}_i\iprec\cnc{c}_j\mbox{ iff }i<j.\]

\item Uniquely originating values originate at at most one node:
  \[\begin{array}{c}
  \all{v\typ S,xy\typ\srt{D},ij\typ\srt{I}}\mathsf{uniqAt}S(v,x,i)
  \land\mathsf{uniqAt}S(v,y,j)\\
  {}\supset x=y\land i=j,
  \end{array}\] for $S\in\{\srt{T},\srt{S},\srt{A}\}$.

\item Carried implies not non-origination:
  \[\all{v\typ S,x\typ\srt{D},i\typ\srt{I}}\mathsf{carriedAt}S(v,x,i)\land
  \mathsf{non}S(v)\supset\falsehood,\]
  for $S\in\{\srt{T},\srt{S},\srt{A}\}$.

\item Precedence is strict:
  \[
  \begin{array}{c}
    \all{x\typ\srt{D},i\typ\srt{I}}(x,i)\not\prec (x,i)\\
    \all{xyz\typ\srt{D},ijk\typ\srt{I}}(x,i)\prec (y,j)\land (y,j)\prec (z,k)
    \supset (x,i)\prec (z,k).
  \end{array}\]

\item Strand succession rule:
  \[\begin{array}{c}
  \all{x\typ\srt{D},ijk\typ\srt{I}}(x,i)\prec(x,k)\land i\iprec
  j\land j\iprec z\\
  {}\supset (x,i)\prec(x,j)\land(x,j)\prec(x,k).
  \end{array}
  \]

\item Node of unique origination is before carried node:
  \[\begin{array}{c}
  \all{v\typ S,x,y\typ\srt{D},i,j\typs{I}}\mathsf{uniqAt}S(v,x,i)\land
  \mathsf{carriedAt}S(v,y,j)\\
  {}\supset (x=y\land i=j)\lor (x,i)\prec (y,j),
  \end{array}\]
  for $S\in\{\srt{T},\srt{S},\srt{A}\}$.

\item Strand predecessor node exists for~$\Pi_i$:
  \[\all{x\typ\srt{D},hj\typ\srt{I}}\Pi_i(x,h)\land j\iprec h\supset
  (x,j)\prec(x,h).\]

\item Non-overlapping role pairs respect node positions:
  \begin{itemize}
  \item If $\Pi_i\prefix 1$ does not unify with $\Pi_j\prefix 1$,
  \[\all{x\typ\srt{D},hk\typ\srt{I}}
  \Pi_i(x,h)\land \Pi_j(x,k)\supset\falsehood.\]
  \item If $\ell$ is the largest position such that
  $\Pi_i\prefix\ell$ unifies with $\Pi_j\prefix\ell$,
  \[\all{x\typ\srt{D},hk\typ\srt{I}}
  \Pi_i(x,h)\land \Pi_j(x,k)\land\cnc{c}_\ell\iprec h\land\cnc{c}_\ell\iprec k
  \supset\falsehood.\]
  \end{itemize}

\item Parameter first occurrence node:
  \begin{itemize}
  \item If $v\typ S$ first occurs in~$\Pi_i$ at~1,
    \[\all{x\typ\srt{D},h\typs{I}}\Pi_i(x,h)\supset
    \some{m\typ S}\Pi_i^v(x,m).\]
  \item If $v\typ S$ first occurs in~$\Pi_i$ at~$\ell$, where $\ell>1$,
    \[\all{x\typ\srt{D},h\typs{I}}\Pi_i(x,h)\land\cnc{c}_{\ell-1}\iprec
    h\supset\some{m\typ S}\Pi_i^v(x,m).\]
  \end{itemize}

\item There is at most one value for each parameter:
\[\all{x\typ\srt{D}, m n\typ S}
\Pi_i^v(x, m)\land \Pi_i^v(x, n)\supset m=n.\]

\item Assert first carried node: \\
  Assume basic value $b\typ S$ is first carried in~$\Pi_i$ at~$\ell$, $v$ is
  the variable that occurs in~$b$, and $n=m$ if $b=v$ else $n=m^{-1}$.
  \begin{itemize}
  \item If $\ell=1$,
    \[\all{x\typ\srt{D},h\typs{I},m\typ S}\Pi_i(x,h)\land\Pi_i^v(x,n)
    \supset\mathsf{carriedAt}(n, x,\cnc{c}_1).\]
  \item If $\ell>1$,
    \[\begin{array}{c}
    \all{x\typ\srt{D},h\typs{I},m\typ
      S}\Pi_i(x,h)\land\cnc{c}_{\ell-1}\iprec h\land\Pi_i^v(x,n)\\
        {}\supset\mathsf{carriedAt}(n, x,\cnc{c}_\ell).
    \end{array}\]
  \end{itemize}

\end{enumerate}

Let $\Upsilon(\Pi)$ be the collection of axioms listed above in this
section.  Observe that $\Upsilon(\Pi)$ is geometric.  Let $\Gamma$ be
a $\bar\lang(\Pi)$ sentence.  Model~$\mdlm$ is a $k$ \emph{skeleton
  model} of~$\Gamma$, written $\mdlm\models_k\Gamma$, iff
\[\mdlm\models\Gamma\land\Upsilon(\Pi)\mbox{ and }
k=\fn{cs}(\diag(\mdlm\Downarrow\lang(\Pi))). \]

\subsubsection{Strand-Oriented vs.\@ Node-Oriented Goal
  Languages}\label{sec:strand vs. node}

A goal language is \emph{strand-oriented} if non-algebra logical
variables denote strands.  A goal language is \emph{node-oriented} if
non-algebra logical variables denote nodes.

In the original work on goal languages~\cite{Guttman14}, all languages
are node-oriented.  Node-oriented languages are to be preferred when
considering protocol transformations, that is, when mapping one
protocol into a larger protocol, and ensuring the goals in the source
protocol are properly reflected in the target protocol.  {\cpsa}
version~3 \cite{cpsa16} provides a node-oriented goal language.  The
first implementation of {\lpa} used {\cpsa}3 and a node-oriented goal
language.

The goal language presented is Section~\ref{sec:protocol formulas} is
strand-oriented.  Strand-oriented languages have the advantage that
they are more compact and easier to understand.  Furthermore, they
align more naturally with {\cpsa} input and output, which itself is
strand-oriented.  {\cpsa} version~4 \cite{cpsa17} provides a
strand-oriented goal language.  The reason {\cpsa}4 came into
existence is not due to positive features of strand-oriented
languages, but instead it was to mitigate the {\lpa} performance issues.
Strand-oriented models of skeletons produced by Z3 are smaller than
node-oriented models.  Section~\ref{sec:results} contains an example
in which a strand-oriented analysis is sixteen times faster that a
node-oriented analysis.

\subsection{Door Simple Example Protocol}\label{sec:doorsep}

We present the DoorSEP Protocol example, and its analysis in full
detail based on the foundation presented in the previous section.  In
this example, imagine there is a door with a badge reader, and a
person with a badge.  The door has opened.  We want to know what else
must have happened.

Recall the diagram in Figure~\ref{fig:arch} to visualize the analysis
process.

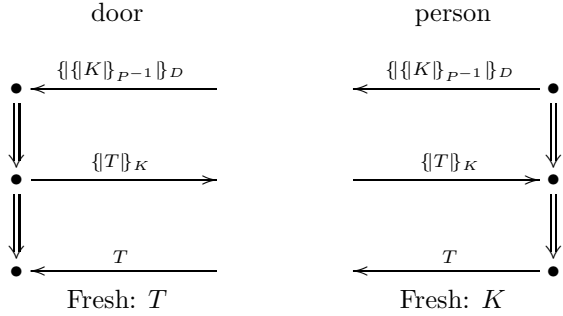
\begin{figure}
  \[\begin{array}[c]{c@{\qquad\qquad}c}
  \mbox{door}&\mbox{person}\\[2ex]
  \xymatrix@C=7em{
    \bullet\ar@{=>}[d]
    &\ar[l]_{\enc{\enc{K}{\iv{P}}}{D}}\\
    \bullet\ar@{=>}[d]\ar[r]^{\enc{T}{K}}&\\
    \bullet&\ar[l]_T}&
  \xymatrix@C=7em{
    &\bullet\ar@{=>}[d]\ar[l]_{\enc{\enc{K}{\iv{P}}}{D}}\\
    \ar[r]^{\enc{T}{K}}&\bullet\ar@{=>}[d]\\
    &\bullet\ar[l]_T}\\[2ex]
  \mbox{Fresh: $T$}&\mbox{Fresh: $K$}
  \end{array}\]
  \caption{DoorSEP Protocol}\label{fig:doorsep protocol}
\end{figure}

To begin this analysis, we must know how the person's badge was used
to authenticate.  We employ the protocol in Figure~\ref{fig:doorsep
  protocol}, which has been designed to have a flaw.  It is based on the
Simple Example Protocol due to Bruno Blanchet.  It fails to achieve
mutual authentication, however, we will add a trust axiom that
restores this property.

In this protocol, a person begins by generating a fresh symmetric
key, signing it, and then encrypting the result using the door's
public key.  If the door accepts the first message, it responds by
freshly generating a token and uses the symmetric key to encrypt it.
If the door receives the token back unencypted, the door concludes the
person that generated the key is at the door and opens.

The initial theory specifies the trust axiom and the fact that the
door is open.  To assert the door is open, one asserts there is a
strand that is a full length instance of the door role by declaring
constant $\cnc{s}\typ\srt{D}$ and asserting $\cn{door}(\cnc{s},
\cnc{c}_3)$.  We further assume a person's private key is
uncompromised by declaring constant $\cnc{p}\typ\srt{A},$ and
asserting $\cn{door}^p(\cnc{s}, \cnc{p})$ and
$\cn{nonA}(\iv{\cnc{p}})$.  The trust axiom is
\begin{equation}\label{eqn:trust axiom}
  \begin{array}{l}
    \all{p,d\typ\srt{A}, s\typ\srt{D}}\cn{nonA}(\iv{p})\land
    \cn{person}(s, \cnc{c}_1)\\
    \quad\land\cn{person}^p(s, p)
    \land\cn{person}^d(s, d)\supset\cn{nonA}(\iv{d}).
  \end{array}
\end{equation}
and will be explained later.

\begin{figure}
  \[\begin{array}{l}
  \some{i_0,i_1,i_2\typ\srt{I},s_0\typ\srt{D},a_0,a_1,a_2,a_3\typ\srt{A},
    k_0\typ\srt{S}, t_0\typ\srt{T}}\\
  i_0=\cnc{c}_1\land i_1=\cnc{c}_2\land i_2=\cnc{c}_3
  \land i_0\ll i_1\land i_0\ll i_2\land i_1\ll i_2\\
  {}\land a_1=\iv{a_0}\land a_0=\iv{a_1}\land a_3=\iv{a_2}\land a_2=\iv{a_3}
  \land s_0=\cnc{s}\land a_2=\cnc{p}\\
  {}\land\cn{door}(s_0,i_2)\land\cn{door}^t(s_0,t_0)
  \land\cn{door}^k(s_0,k_0)
  \land\cn{door}^p(s_0,a_2)
  \land\cn{door}^d(s_0,a_0)\\
  {}\land\cn{uniqAtT}(t_0,s_0,i_1)\land\cn{carriedAtT}(t_0, s_0, i_1)
  \land\cn{carriedAtS}(k_0, s_0,i_0)\\
  {}\land(s_0,i_0)\prec(s_0,i_1)
  \land(s_0,i_0)\prec(s_0,i_2)\land(s_0,i_1)\prec(s_0,i_2)\land\cn{nonA}(a_3)
  \end{array}\]
  \caption{DoorSEP First Model}\label{fig:doorsep first model}
\end{figure}

After appending the initial theory to the Skeleton Axioms for the
DoorSEP protocol specified in Section~\ref{sec:skeleton axioms}, Razor
finds the model in Figure~\ref{fig:doorsep first model}.  The first
line declares the set of elements in the domain.  The second line
shows the properties we depend on for integers.  The third line shows
that we have two pairs of asymmetric keys.  The forth line asserts we
have an instance of the door role of full length.  The remainder
asserts properties of basic values and the orderings implied by strand
succession.

\begin{figure}
  \[\xymatrix@C=7em{
    \txt{\strut door}&&\txt{\strut person}\\
    \bullet\ar@{=>}[d]
    &\succ\ar[l]_{\enc{\enc{K}{\iv{P}}}{D}}
    &\bullet\ar[l]_{\enc{\enc{K}{\iv{P}}}{D'}}\\
    \bullet\ar@{=>}[d]\ar[r]^{\enc{T}{K}}&\\
    \bullet&\ar[l]_T}\]
  \begin{center}
    Uncompromised: $P\quad$ Fresh: $K, T$
  \end{center}
  \caption{DoorSEP First Shape}\label{fig:doorsep first shape}
\end{figure}
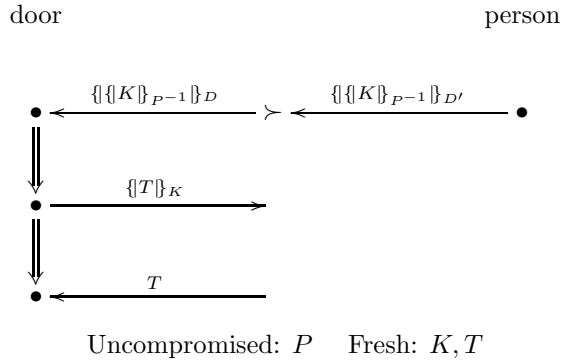

At this stage, we have a model that characterizes an unrealized
skeleton, and we would like to use {\cpsa} to find out what else must
have happened.  Ignoring atomic formulas using the predicates
\cn{carriedAtT}, \cn{carriedAtS}, and $\ll$ produces a model in the
language of the DoorSEP protocol.  The Skeleton Axioms ensure the
extraction of a skeleton that can be used as an initial {\cpsa}
scenario.  The shape produced by {\cpsa} is displayed in
Figure~\ref{fig:doorsep first shape}.

The shape shows the lack of mutual authentication
built into this flawed protocol.  To open the door, a person can use
an arbitrary compromised key for the door.  That is, without the trust
axiom, the answer to the what else happened question is that the
person holding private key~$\iv{\cnc{p}}$ swiped with their badge, but
the key used to identify the door may have been compromised, and an
adversary may have completed the rest of the protocol.

Consider the case in which the door is well known to the owner of the
badge.  For example, suppose the badge is issued by the institution
that owns the door and is tamper proof.  In that case, the person
knows to initiate the DoorSEP protocol (swipe their badge) only when
in front of a door with the correct key.  The trust axiom in
Eq.~\ref{eqn:trust axiom} codifies this policy.  It states that if a
person with an uncompromised key initiates the protocol, the door key
used is uncompromised.

\begin{figure}
  \[\begin{array}{l}
  \all{t_0\typ\srt{T},k_0\typ\srt{S},a_0,a_2\typ\srt{A},s_0\typ\srt{D}}\\
    \quad\cn{door}(s_0, \cnc{c}_3)\land
    \cn{door}^t(s_0, t_0)\land
    \cn{door}^k(s_0, k_0)\land
    \cn{door}^d(s_0, a_0)\\
    \quad\land\cn{door}^p(s_0, a_2)\land
    \cn{nonA}(\iv{a_2})\land
    \cn{uniqAtT}(t_0,s_0,\cnc{c}_2)\\
    \quad\supset\some{s_1\typ\srt{D},d\typ\srt{A}}\\
    \qquad\cn{person}(s_1,\cnc{c}_1)\land
    \cn{person}^k(s_1,k_0)\land
    \cn{person}^d(s_1,d_0)\\
    \qquad\land
    \cn{person}^p(s_1,a_0)\land
    (s_0,\cnc{c}_1)\prec(s_1,\cnc{c}_1)\land
    \cn{uniqAtS}(k_0, s_1,\cnc{c}_1)
  \end{array}\]
  \caption{DoorSEP First {\sas}}\label{fig:doorsep first sas}
\end{figure}

The next step in the analysis makes use of the trust axiom.  The
result of the {\cpsa} analysis is transformed into the {\sas} shown
in Figure~\ref{fig:doorsep first sas}.  The antecedent specifies the
initial scenario described by the first model.  The consequence
specifies what else must be added to make the initial scenario into
the complete execution show in Figure~\ref{fig:doorsep first shape}.

When the {\sas} is added to the current theory, Razor finds one
model.  The skeleton extracted from this model is very similar to the
shape in Figure~\ref{fig:doorsep first shape} with one crucial
difference: the key $D'$ is uncompromised.  This skeleton is
unrealized, so {\cpsa} can make a contribution.  It finds a {\sas}
that extends the length of the person strand to full length and
equates~$D$ and~$D'$.  The addition of this {\sas} produces a model
that characterizes a realized skeleton with full agreement between the
door and person strands.  Because the skeleton is realized, {\cpsa}
has nothing more to contribute and the analysis terminates.

\iffalse
\begin{enumerate}
\item The protocol is in Figure~\ref{fig:doorsep protocol}
\item The initial theory is:
  \begin{itemize}
  \item Constants: $\cnc{s}\typ\srt{D}$ and $\cnc{p}\typ\srt{A}$.
  \item Assertions:
    \[\begin{array}{l}
    \cn{door}(\cnc{s}, \cnc{c}_3),\quad
    \cn{door}^p(\cnc{s}, \cnc{p}),\quad
    \cn{nonA}(\iv{\cnc{p}}),\mbox{ and}\\
    \begin{array}{@{}l}
      \all{p,d\typ\srt{A}, s\typ\srt{D}}\cn{nonA}(\iv{p})\land
      \cn{person}(s, \cnc{c}_1)\\
      \quad\land\cn{person}^p(s, p)
      \land\cn{person}^d(s, d)\supset\cn{nonA}(\iv{d}).
    \end{array}
    \end{array}\]
  \end{itemize}
\item First model is in Figure~\ref{fig:doorsep first model}
\item First shape is in Figure~\ref{fig:doorsep first shape}
\end{enumerate}
\fi

\section{Finding Minimal Models}\label{sec:finding minimal models}

Below we will assume familiarity with basic ideas and results from
first-order mathematical logic; notions that are not defined here are
treated in any text on logic.  There are some variations in the
formalities of different presentations in the literature, but
definitions in the SMT-Lib standard \cite{smt-lib-web} line up
particularly well with our notation.

In this chapter we present some of the foundations of model-finding,
focusing on the use of a Satisfiability Modulo Theories (SMT) solver.
In broadest terms, model-finding is the following task:   given a
logical theory \thyT, produce one or more (finite) models of $\thyT$.

Of course a typical satisfiable theory will have many models.  Special
emphasis is given in this chapter to the question of \emph{which
  models should be presented to the user?}  One answer---embodied in
the \lpa\ tool---is based on the fundamental notion of
\emph{homomorphism} between models, with a special emphasis on models
that are \emph{minimal} (see Section~\ref{sec:minimality}) in the
pre-order determined by homomorphism.

Section~\ref{sec:working-with-solver} addresses some of the strategies
we evolved in programming against an SMT solver.

%% %%%%%%%%%%%%%%%%%%%%%%%%%%%%%%%%%%%%%%%%%%%%%%%%%%

%% %%%%%%%%%%%%%%%%%%%%%%%%%%%%%%%%%%%%%%%%%%%%%%%%%%
\subsection{Foundations}
\label{sec:foundations}

\subsubsection{Language}

We work with many-sorted first-order logic.
Concretely, we implement the logic defined in the SMT-Lib standard,
version 2, as described in \cite{smtlib}.   This logic is \emph{not}
order-sorted: sorts are all disjoint, and models must have all
sort-interpretations be non-empty.

The SMT-LIB language adopts the convention that all expressions are
terms, and the ``formulas'' are, by convention, terms of sort Bool.
For software-engineering reasons we have found it beneficial to work
with terms and formulas as separate types.
The translation is straightforward, of course, and so Razor does the
following
\begin{itemize}
\item when theories are loaded from user's input files (or
  intermediate files created by our tool), the theory is translated
  from a ``term-based'' one to a ``term-and-formula--based'' one, which
  is manipulated internally;
\item when theories are sent to the SMT solver
  they are translated back to being
  a ``term-based'' one
\end{itemize}

Certain classes of formulas and theories play a special role in our tool.

\begin{definition}[PE formula, Geometric theory]
A formula is \emph{positive-existential,} or \emph{PE}, if it is built
from atomic formulas (including $\truth$  and $\falsehood$) using
$\land$, $\lor$ and $\exists$.

A theory $T$ is \emph{geometric} if has an axiomatization in which
each axiom is of the form
\[
  \forall \vec{x} .\quad \alpha(\vec{x}) \to \beta(\vec{x})
\]
where $\alpha$ and $\beta$ are positive-existential.
\end{definition}

\subsubsection{Models}

Fix a signature \sig.  A model \mM for signature \sig will be called
a \emph{\sig-model}; notation $\mM \models \sig$.

Let $T$ be an \sig-theory.
We are interested in categories of models of $T$; for the most part we
only consider finite models.

\begin{notation}
  If $\alpha$ is a formula with free variables among %
$x_1, \dots, x_n$ %
and $a_1, \dots , a_n$ are elements of a model
  \mA we will sometimes write %
  $\mA \models \alpha[a_1, \dots , a_n]$ as shorthand  to mean that
  $\mA \models \alpha(x_1, \dots , x_n)$ under the environment sending
  each $x_i$ to $a_i$.
\end{notation}

\begin{definition}[Diagram, Characteristic Sentence]%
  % \footnote{The notion of diagram is actually tricky, as to the
  %   signature over which the diagram is a sentence.  For the
  %   characteristic sentence we want to be sure to use the original
  %   signature; but not for the diagram.  Do we ADD constants for each
  %   element?  Even if all or some elements are already named by
  %   constants in the original signature?  There are many different
  %   signatures wrt which we could define a diagram, so it is not an
  %   ``intrinsic'' notion.  Let's wait and see if we need to use
  %   diagram as a formal concept.  }

  Let $\mM$ be a finite model over signature \sig. %
  Expand the signature \sig by adding, %
  for each element $e$ of the domain of \mM, a constant $c_e$, and let
  $\mM^+$ be the corresponding expansion of $\mM$.  The \emph{diagram}
  $\diagram{\mM}$ %
  of \mM is the set of atomic sentences and negations of atomic
  sentences true in $\mM^{+}$.  %
  If we take only the atomic sentences, the result is the
  \emph{positive diagram} $\posdiagram{\mM}$ of \mM.

    The sentence---in the original signature \sig---obtained by
    converting the new constants in the positive diagram to variables and
    existentially quantifying them is called the %
    \emph{characteristic sentence} $\ch{\mM}$ of  \mM.

    If we enrich $\ch{\mM}$ with a suitable conjunct  expressing the
    fact that the elements named by the constants are all distinct, we
    obtain the    \emph{$i$-characteristic sentence} $\ich{\mM}$ of
    \mM .
\end{definition}

\begin{comment}   @@ do we need any of this?
User signature is $\Sigma_u$. %
All user theories are over  this. %
The homs we care about are relative to this. %
But there will be lots of playing around with signatures.

Based on a profile [@@ def this stuff]
add some set $K$ of constants $c_i$ (ignore different sorts in the notation
here), %
Call this new signature $\Sigma_K$. %
Have a ``bounds axiom'' $\prfbounds$ over $\Sigma_K$ that says that
everything is one of the $c_i$. %

Let $\Sigma_{uK} \eqdef \Sigma_u \cup \Sigma_K$.

Can see this is a proactive addition of constants  as is done in the
defining the diagram of a model:  we do it in advance...the point is
simply to get to a situation where everything in a model is named by a
constant.

Subtlety: the diagram is, in general, defined over a subset of $K$...
write about all this...
\end{comment}

\subsubsection{Homomorphisms}

\begin{definition}
  Let \mA and \mB be \sig-models.  A function %
  $h : |\mA| \to |\mB|$ is a \emph{homomorphism} if
  \begin{enumerate}
  \item %
    $\mA \models f [a_1, \dots, a_n] = a$ \text{ implies }
    $\mB \models f [h(a_1),  \dots, h(a_n)] = h(a)$ and
  \item
    $\mA \models R [a_1, \dots, a_n]$ \text{ implies }
    $\mB \models R [h(a_1),  \dots, h(a_n)]$.
  \end{enumerate}
  It is a \emph{strong homomorphism} if the second condition is
  replaced by%
  \footnote{Caution: some authors use the term ``homomorphism'' to
    mean our ``strong
    homomorphism.''}

  \begin{enumerate}
  \item[2']
    $\mA \models R [a_1, \dots, a_n]$ \text{ if and only if }
    $\mB \models R [h(a_1),  \dots, h(a_n)].$
  \end{enumerate}
  ( The first condition is already equivalent to ``if and
  only if.'')
\end{definition}

\begin{definition}
  A homomorphism from a model to itself is called an \emph{endomorphism.}

  An \emph{embedding} $e : \mA \to \mB$ is a strong homomorphism which
  is injective.

  % A homomorphism  $e : \mA \to \mB$ is an \emph{inclusion}
  % if $|\mA|  \subseteq |\mB|$ and $e$ acts as the identity on $|\mA|$.

  An \emph{isomorphism} $f: \mA \to \mB$ is a homomorphism such that
  there is a homomorphism $g: \mB \to \mA$ with $g \comp f = id_\mA$
  and $f \comp g = id_\mB$.
\end{definition}

Note that an isomorphism is \emph{not} the same as a bijective
homomorphism (it is the same as a bijective \emph{strong} homomorphism).
On the other hand:

\begin{lemma} \label{bi-inj-iso}
  Let \mA and \mB be finite.
  If $h: \mA \to \mB$ and $g: \mB \to \mA$ are injective
  homomorphisms, then they are isomorphisms.
\end{lemma}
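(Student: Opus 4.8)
The plan is to reduce the statement to a counting argument together with the observation that an injective endomorphism of a finite model is forced to be a bijection whose inverse can be rebuilt out of the homomorphisms we already have.

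First I would note that, since $h:|\mA|\to|\mB|$ and $g:|\mB|\to|\mA|$ are injective and both carriers are finite, we get $|\mA|\le|\mB|$ and $|\mB|\le|\mA|$, hence $|\mA|=|\mB|$ and both $h$ and $g$ are bijections of the underlying sets. Next, consider the composite $g\comp h:\mA\to\mA$. It is a homomorphism (composites of homomorphisms are homomorphisms, directly from the definition), and as an injective self-map of the finite set $|\mA|$ it is a permutation; therefore it has finite order, i.e.\ $(g\comp h)^k=\id{\mA}$ for some $k\ge 1$.

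The key step is then to set $\ell \eqdef (g\comp h)^{k-1}\comp g$. This is a composite of homomorphisms, hence a homomorphism $\mB\to\mA$, and $\ell\comp h=(g\comp h)^{k}=\id{\mA}$. Since $h$ is a bijection, its unique left inverse is its set-theoretic inverse, so $\ell=h^{-1}$; in particular $h\comp\ell=\id{\mB}$ as well. Thus $h$ is a homomorphism possessing a two-sided inverse that is itself a homomorphism, i.e.\ an isomorphism in the sense of the definition above. By the symmetric argument (swap the roles of $h$ and $g$, using that $h\comp g$ is likewise a finite-order permutation of $|\mB|$), $g$ is an isomorphism as well.

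The only real subtlety---and what makes the lemma nontrivial---is that $h$ is assumed merely to be a homomorphism, not a \emph{strong} one, so a priori $h^{-1}$ need not preserve relations; finiteness is exactly what rescues us, since the finite order of $g\comp h$ lets us exhibit $h^{-1}$ as a composite of the given homomorphisms, which automatically preserves relations and operations. Beyond that, nothing but the routine fact that homomorphisms compose is required.
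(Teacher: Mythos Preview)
Your proof is correct and follows essentially the same route as the paper's: both consider the injective endomorphism $g\comp h$ of the finite model $\mA$, use that some power $(g\comp h)^n$ is the identity, and then exhibit $h^{-1}$ as the composite $(g\comp h)^{n-1}\comp g$, which is automatically a homomorphism. Your version is slightly more careful in first establishing that $h$ is a bijection via the cardinality count (so that $h^{-1}$ unambiguously exists and $h\comp\ell=\id{\mB}$ follows), and in making explicit the point about non-strong homomorphisms, but the core argument is the same.
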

\begin{proof}
  It will be enough to prove that $h$ is an isomorphism. %
  We use the following elementary fact: if $f : X \to X$ is an
  injective set-theoretic function on a finite set, then for some $n$,
  $f^n : X \to X$ is the identity.

  So consider $g\comp h: \mA \to \mA$.  This is an injection, so for
  some $n$, $(g \comp h)^n$ is the identity on \mA.  So
  $(g \comp h)^{n-1} \comp g$ is $h^{-1}$.  Since this function is a
  composition of homomorphisms, it is a homomorphism.  Thus $h$ is an
  isomorphism.
\end{proof}

\subsubsection{Categories of Models}
Since the identity map is injective and injective maps are closed
under composition, the class of \sig-models under injective maps makes a
category.

\begin{notation} \hfill
\begin{itemize}
\item $\modcat{\sig}$ is the category of all finite models of \sig
  with arbitrary homomorphisms.

Write $\mA \homleq \mB$ %
(sometimes $\mB \homgeq \mA$) %
  if there is a $\modcat{\sig}$ map $h:
\mA \to \mB$.

Write $\mA \homeq \mB$ if
$\mA \homleq \mB$  and $\mB \homleq \mA$.

Write $\mA \homle \mB$ if
$\mA \homleq \mB$  and not $\mB \homleq \mA$.

\item $\imodcat{\sig}$ is the category of all finite models of \sig
  with injective homomorphisms.

Write $\mA \ihomleq \mB$ %
(sometimes $\mB \ihomgeq \mA$) %
if there is a $\imodcat{\sig}$ map $h:
\mA \to \mB$.

 Write $\mA \ihomeq \mB$ if
$\mA \ihomleq \mB$  and $\mB \ihomleq \mA$.

Write $\mA \ihomle \mB$ if
$\mA \ihomleq \mB$  and not $\mB \ihomleq \mA$.
\end{itemize}
\end{notation}
Both \homleq and \ihomleq are preorders.

Note that if $\mA \ihomleq \mB$ then $\mA$ is isomorphic to a (not
necessarily induced) submodel of $\mB$.

\begin{definition}
  Model \mB is a \emph{submodel} of \mA if
  $|\mB|  \subseteq |\mA| $ and the inclusion function is a
  homomorphism.

  A submodel \mB is a \emph{proper submodel} of \mA if
  $\mB \neq \mA$:  this can happen if \emph{either}
  $|\mB| \neq |\mA| $ \emph{or} for some
  tuple $[a_1, \dots, a_n]$,
    $R^{\mA}[a_1, \dots, a_n]$ fails while
    $R^{\mB}[(a_1), \dots, (a_n)]$ holds.

  Model \mB is an \emph{induced submodel} of \mA if
  $|\mB|  \subseteq |\mA| $ and the inclusion function is a
  strong homomorphism.

\end{definition}

\begin{definition}
  If \modclass is a class of \sig-models and
  $\modclass_0 \subseteq \modclass$ say that $\modclass_0$ is a
  \emph{$\modcatsig$ set of support for \modclass} if %
  for all $ \mB \in \modclass$, there exists $\mA \in \modclass_0$
  with $\mA \homleq \mB$. %
  Similarly for $\imodcatsig$.
\end{definition}

\subsubsection{Well-foundedness of the homomorphism preorder}
\label{sec:well-foundedness-hom}
We will often add axioms to a theory to ensure that there is an upper
bound on the size of its models.  In such a case there will be only
finitely many models of $T$.

\begin{lemma}
  Let $T$ be a theory with only  finitely many models.
  Then the \homle and \ihomle orders on models of $T$ are well-founded.
\end{lemma}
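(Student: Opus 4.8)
The plan is to show that neither $\homle$ nor $\ihomle$ admits an infinite descending chain, handling both cases uniformly: I will write the argument for $\homle$, and the one for $\ihomle$ is obtained simply by reading ``injective homomorphism'' for ``homomorphism'' throughout.

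First I would record two routine facts about models of $T$. Fact (i): $\homle$ is a strict partial order. Irreflexivity is immediate from its definition; transitivity holds because homomorphisms compose --- if $\mA\homleq\mB$ and $\mB\homleq\mC$ then $\mA\homleq\mC$, and a map witnessing $\mC\homleq\mA$ would compose to give $\mC\homleq\mB$, contradicting $\mB\homle\mC$. Fact (ii): isomorphic models are $\homeq$, hence never $\homle$-related, since an isomorphism is a homomorphism and so is its inverse, giving both $\mA\homleq\mB$ and $\mB\homleq\mA$. (For $\ihomle$ the same two facts hold because the identity and composites of injective maps are injective, and an isomorphism and its inverse are injective.)

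Next I would suppose, toward a contradiction, an infinite chain $\mM_0,\mM_1,\mM_2,\dots$ of models of $T$ with $\mM_{k+1}\homle\mM_k$ for every $k$. Here ``only finitely many models'' is read, as is standard, as finitely many up to isomorphism; so by the pigeonhole principle there are indices $i<j$ with $\mM_i\cong\mM_j$, whence $\mM_i\homleq\mM_j$ by Fact (ii). On the other hand, transitivity from Fact (i), applied along $\mM_j\homle\mM_{j-1}\homle\cdots\homle\mM_i$ (or directly when $j=i+1$), gives $\mM_j\homle\mM_i$, so in particular $\neg(\mM_i\homleq\mM_j)$ --- a contradiction. (Equivalently and perhaps more transparently, one passes to the quotient of the class of models of $T$ by $\homeq$: this is a \emph{finite} poset, because the isomorphism relation refines $\homeq$, and $\homle$ induces its strict order; a strict order on a finite poset is trivially well-founded, and an infinite $\homle$-descending chain of models would project to an infinite descending chain in it.)

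I do not anticipate a genuine obstacle. The only point that calls for a little care is recognizing that ``finitely many models'' must mean up to isomorphism, so that literal finiteness of carriers is unavailable and one genuinely needs Fact (ii) to close the loop; the remainder is bookkeeping about the preorder, and the passage to $\ihomle$ requires nothing beyond noting that the injective maps are closed under composition and contain the isomorphisms and their inverses.
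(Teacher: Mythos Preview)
Your proposal is correct and follows essentially the same approach as the paper: both argue by contradiction from an infinite descending chain, use the finiteness of models (up to isomorphism) via pigeonhole to find a repeated isomorphism class, and then obtain a contradiction with strictness of the order. Your presentation is slightly more explicit in isolating the needed facts about the preorder and in noting that ``finitely many models'' is to be read up to isomorphism, but the underlying argument is the same.
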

\begin{proof}
  Suppose for the sake of contradiction that we have an infinite
  descending chain of strict homomorphisms:
\begin{align*}
\dots & \homle \mM_2 \homle \mM_1 \homle \mM_0 %
\end{align*}
Then we have $\mM_{i+k} \homle \mM_i$ for any $k \geq 0$.  Since $T$
has finitely many models, we eventually get $i$ and $k \geq 0$ with
$\mM_{i+k+1}$ isomorphic to $\mM_i$.  So
$\mM_{i+k+ 1} \homle \mM_{i+1}$.  But that implies
$\mM_{i} \homle \mM_{i+1}$, a contradiction.

The same argument applies to \ihomleq as well.
\end{proof}

\subsubsection{Homomorphisms and Logical Form}

\begin{theorem}
  The following are equivalent, for a formula $\alpha(\vec{x})$:
  \begin{enumerate}
  \item $\alpha$ is preserved by homomorphism: if $h : \mA \to \mB$ is
    a homomorphism, and $\vec{a}$ is a vector of elements from \mA
    such that $\mA \models \alpha[\vec{a}]$, then
    $\mB \models \alpha[\vec{h a}]$.
  \item $\alpha$ is logically equivalent to a PE formula.
  \item $\alpha$ is equivalent, in the category $\modcat{\sig}$ of
      finite models, to a PE formula.
  \end{enumerate}
\end{theorem}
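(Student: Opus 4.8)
The plan is to establish the cycle $(2)\Rightarrow(1)\Rightarrow(2)$ and then close the loop through the finite models using the two cheap implications $(2)\Rightarrow(3)$ and $(3)\Rightarrow(1)$.

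I would begin with the routine directions. For $(2)\Rightarrow(1)$: argue by induction on the construction of a PE formula $\varphi$ that $\varphi$ is preserved by homomorphisms. The one place needing attention is the atomic case, which rests on the substitution lemma $h(t^{\mA}[\vec a]) = t^{\mB}[\vec{h a}]$ for every term $t$ --- itself a short induction on $t$ using the function clause of the definition of homomorphism. Given that, an atomic $R(\vec t)$ transfers by the relation clause, an equation $t_1 = t_2$ by applying $h$ to both sides, $\truth$ and $\falsehood$ are trivial, and $\land$, $\lor$, $\exists$ are immediate from the inductive hypothesis. Since a logical equivalence in particular holds in the subcategory $\modcat{\sig}$ of finite models, $(2)\Rightarrow(3)$ is immediate. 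For $(3)\Rightarrow(1)$: if $\alpha$ agrees with a PE formula $\varphi$ on all finite models, then for a homomorphism $h\colon\mA\to\mB$ of finite models with $\mA\models\alpha[\vec a]$ we get $\mA\models\varphi[\vec a]$, hence $\mB\models\varphi[\vec{h a}]$ by the case just treated, hence $\mB\models\alpha[\vec{h a}]$.

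The real content is $(1)\Rightarrow(2)$, the homomorphism preservation theorem. Treat the free variables $\vec x$ as fresh constants, so $\alpha$ becomes a sentence, and let $\Gamma$ be the set of PE sentences $\psi$ with $\alpha\models\psi$. Since PE sentences are closed under finite conjunction, it suffices to prove $\Gamma\models\alpha$: compactness then yields a finite $\Gamma_0\subseteq\Gamma$ with $\bigwedge\Gamma_0\models\alpha$, and as $\alpha\models\bigwedge\Gamma_0$ already, $\alpha$ is logically equivalent to the PE formula $\bigwedge\Gamma_0$ (reinstating $\vec x$). To see $\Gamma\models\alpha$, fix $\mB\models\Gamma$ and set
\[
  \Sigma \;=\; \{\alpha\}\cup\{\,\lnot\psi \;:\; \psi \text{ a PE sentence with } \mB\not\models\psi\,\}.
\]
If $\Sigma$ were inconsistent, compactness would give PE sentences $\psi_1,\dots,\psi_n$, none true in $\mB$, with $\alpha\models\psi_1\lor\cdots\lor\psi_n$; but the disjunction is again PE, so it lies in $\Gamma$ and hence holds in $\mB$ --- a contradiction. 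So let $\mA\models\Sigma$: then $\mA\models\alpha$, and every PE sentence true in $\mA$ is true in $\mB$. It now suffices to produce a homomorphism $h\colon\mA\to\mB$ sending each constant of $\vec x$ to its counterpart (consistent because atomic formulas are PE), for then preservation of $\alpha$ along $h$ gives $\mB\models\alpha$.

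Producing that homomorphism is the step I expect to be the main obstacle: ``every PE sentence of $\mA$ holds in $\mB$'' does not by itself deliver a homomorphism, since $\mA$ and $\mB$ may be infinite and a homomorphism must place an image for every element of $\mA$ simultaneously. The standard remedy is to replace $\mB$ by an $|\mA|^{+}$-saturated elementary extension $\mB'$ --- so $\mB'\models\alpha$ iff $\mB\models\alpha$, and the PE-sentence containment survives because $\mB\equiv\mB'$ --- and then to build $h\colon\mA\to\mB'$ by transfinite recursion along an enumeration of $|\mA|$. At each stage one extends the partial map by realizing in $\mB'$ the positive atomic type of the next element over the images already chosen; that type is finitely satisfiable in $\mB'$ precisely because each finite part of it corresponds to an existential-positive statement true in $\mA$ and therefore in $\mB'$, and saturation supplies a witness. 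Carrying the invariant that the partial map respects \emph{all} PE formulas over its current domain --- not merely the atomic ones --- makes the recursion go through and keeps the map a homomorphism in the limit. This finishes $(1)\Rightarrow(2)$ and hence the theorem; apart from this saturation/back-and-forth argument, the remaining work is bookkeeping.
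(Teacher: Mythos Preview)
Your cycle does not close. The argument you give for $(3)\Rightarrow(1)$ explicitly restricts to a homomorphism $h\colon\mA\to\mB$ between \emph{finite} models, since only there is the hypothesised equivalence $\alpha\leftrightarrow\varphi$ available. So what you actually establish is $(3)\Rightarrow$ ``$\alpha$ is preserved under homomorphisms of finite structures.'' But your proof of $(1)\Rightarrow(2)$ requires preservation along homomorphisms between \emph{arbitrary} structures: you invoke compactness, and you replace $\mB$ by an $|\mA|^{+}$-saturated elementary extension $\mB'$, which is in general infinite. Preservation on finite structures alone does not license that argument---compactness fails in finite model theory---so you cannot feed the output of your $(3)\Rightarrow(1)$ step into your $(1)\Rightarrow(2)$ step.

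This gap is not a matter of bookkeeping. The passage from preservation on finite structures (equivalently, from $(3)$) back to a positive-existential equivalent is exactly Rossman's homomorphism preservation theorem, which the paper cites as ``a deep result'' rather than attempting to prove. Rossman's argument is of a fundamentally different character from the compactness/saturation proof you sketch: it relies on finite-combinatorial machinery because the infinitary tools are simply unavailable, and the question stood open for decades for precisely that reason. The paper's own proof consists of two citations---the classical preservation theorem for $(1)\Leftrightarrow(2)$ over arbitrary models, and Rossman for the finite-model equivalence---so your treatment of the classical $(1)\Rightarrow(2)$ direction is fine and matches what the paper intends there; but you have misjudged where the weight lies. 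The implication you labelled ``cheap'' is where the real content sits.
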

\begin{proof}
  The equivalence of (1) and (2) is a classical result in model theory
  when considering arbitrary models.  The equivalence of (1) and (3)
  is a deep result of Rossman \cite{Rossman2008}.
\end{proof}

\begin{lemma}
  Let \mM and \mN be  arbitrary models.

  \begin{itemize}
  \item
  The following are equivalent.
  \begin{enumerate}
  \item   $\mM \homleq \mN.$
  \item   $\mN \models \ch{\mM}.$
  \item  $\ch{N} \models \ch{\mM}$.
  \end{enumerate}
\item   The following are equivalent.
  \begin{enumerate}
  \item   $\mM \ihomleq \mN.$
  \item     $\mN \models \ich{\mM}.$
  \item  $\ich{N} \models \ich{\mM}$.
  \end{enumerate}
\end{itemize}
\end{lemma}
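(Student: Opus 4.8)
The plan is to prove both bullet points in parallel, since the injective case is a direct refinement of the general case. Recall that $\ch{\mM}$ is obtained from the positive diagram $\posdiagram{\mM}$ by replacing each new constant $c_e$ with a fresh variable and existentially quantifying; $\ich{\mM}$ adds a conjunct asserting the named elements are pairwise distinct.

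For the first bullet, I would show $(1)\Rightarrow(2)\Rightarrow(3)\Rightarrow(1)$. For $(1)\Rightarrow(2)$: given a homomorphism $h\colon\mM\to\mN$, use $h$ itself as the witness for the existential quantifiers in $\ch{\mM}$ — that is, interpret the variable corresponding to $c_e$ as $h(e)$. The homomorphism conditions (preservation of the graphs of function symbols and of atomic relational facts) are exactly what is needed to verify that every conjunct of $\posdiagram{\mM}$, read under this assignment, holds in $\mN$; hence $\mN\models\ch{\mM}$. For $(2)\Rightarrow(3)$: $\ch{\mM}$ is a PE sentence, so by the Homomorphism-and-Logical-Form theorem above it is preserved by homomorphisms, and in particular preserved under the homomorphism from $\mN$ into any model of $\ch{\mN}$; more directly, $\mN\models\ch{\mN}$ always holds (take $h=\mathit{id}$ in the $(1)\Rightarrow(2)$ argument applied to $\mN$), and since $\ch{\mN}$ is satisfiable exactly by models admitting a homomorphism from $\mN$, any model of $\ch{\mN}$ receives a homomorphism from $\mN$, hence from $\mM$ by composition with the witnessing map, hence models $\ch{\mM}$. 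For $(3)\Rightarrow(1)$: since $\mM\models\ch{\mM}$ (identity witness) and a fortiori $\mM\models\ch{\mM}$ follows from... actually the clean route is $(3)\Rightarrow(2)$ via $\mN\models\ch{\mN}$, then $(2)\Rightarrow(1)$: from $\mN\models\ch{\mM}$, read off witnesses $b_e\in|\mN|$ for the quantified variables making $\posdiagram{\mM}$ true; define $h(e)=b_e$. The atomic conjuncts of $\posdiagram{\mM}$ guarantee $h$ respects the graphs of functions and relations, so $h$ is a homomorphism $\mM\to\mN$.

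The second bullet is the same argument with one addition. In $(1)\Rightarrow(2)$, when $h$ is injective the distinctness conjuncts of $\ich{\mM}$ are satisfied because $h(e)\neq h(e')$ whenever $e\neq e'$. Conversely, in $(2)\Rightarrow(1)$, a model of $\ich{\mM}$ supplies witnesses $b_e$ that are pairwise distinct (by the distinctness conjuncts), so the induced map $h(e)=b_e$ is injective, i.e.\ an $\imodcat{\sig}$ map. The equivalences with the third clause go through verbatim, using that $\mN\models\ich{\mN}$ for every model $\mN$.

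I do not expect a genuine obstacle here: everything is bookkeeping once one notices that a satisfying assignment for the existentials of $\ch{\mM}$ \emph{is} a homomorphism and vice versa, and that the extra distinctness conjuncts of $\ich{\mM}$ encode injectivity. The only point requiring a little care is the step through the third clause — one should be explicit that $\mN\models\ch{\mN}$ (resp.\ $\mN\models\ich{\mN}$) holds unconditionally, which is just the identity-homomorphism instance of the implication already proved, so that $\ch{\mN}\models\ch{\mM}$ and $\mN\models\ch{\mM}$ are interderivable. No appeal to compactness or to finiteness of the models is needed for this lemma (the statement is for arbitrary $\mM,\mN$), though preservation of PE formulas under homomorphism — used implicitly — is exactly clause (1)$\Leftrightarrow$(2) of the preceding theorem.
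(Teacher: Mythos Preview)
Your proposal is correct and follows essentially the same route as the paper: establish $(1)\Leftrightarrow(2)$ directly by reading a witnessing assignment for $\ch{\mM}$ as a homomorphism and vice versa, then obtain the link to $(3)$ via the trivial fact $\mN\models\ch{\mN}$ together with transitivity of $\homleq$; the injective case is handled identically with the distinctness conjuncts encoding injectivity. Your exposition meanders a bit (the initial cycle $(1)\Rightarrow(2)\Rightarrow(3)\Rightarrow(1)$ is abandoned mid-paragraph in favor of proving $(2)\Rightarrow(1)$ separately, which is exactly what the paper does), but once the dust settles the argument and its decomposition match the paper's.
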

\begin{proof}
  For the assertions about unconstrained homomorphisms:

  (1) implies (2):
  Any function $h: |\mM| \to |\mN|$ determines a way to instantiate in \mN
  the existentially quantified variables in \ch{\mM}.   When $h$ is a
  homomorphism, this instantiation will make the body of \ch{\mM}
  true, essentially by definition of homomorphism.

  (2) implies (1):  Suppose $\mN \models \ch{\mM}$.  Then the
  instantiation of the existentially quantified variables in \ch{\mM}
  determines a function $h: |\mM| \to |\mN|$.  The fact that this
  instantiation makes the body of \ch{\mM} true implies that $h$ is a
  homomorphism.

  (1) implies (3): Suppose $\mM \homleq \mN$.  We want to show that
  for any \mP, $\mP \models \ch{\mN}$ entails $\mP \models
  \ch{\mM}$.  %
  By the equivalence of (1) and (2) it suffices to show that
  $\mN \homleq \mP$ entails $\mM \homleq \mP$.   Since
 $\mM \homleq \mN$ this follows from transitivity of $\homleq.$

  (3) implies (2): this is clear, since \mN is a model of \ch{\mN}.

  The proofs for the injective case are almost identical.  The
  additional factor is that the ``distinctness'' assertions in
  the i-characteristic sentences ensure that the homomorphisms
  considered are injective.
\end{proof}

\subsubsection{Retractions and Cores}

The notion of the \emph{core} of a model is standard; it is important
for us because
cores give canonical representatives of $\homeq$ equivalence classes.

Core are defined in terms of \emph{retractions,} as follows.

\begin{definition}
  A \emph{retraction} $r: \mA \to \mB$ is a homomorphism such that
  there is a homomorphism $e: \mB \to \mA$ with
  $r \circ e = \id{B}$.
\end{definition}

The following are well-known facts about retractions.
If $r$ is a retraction then $r$ is surjective and the corresponding $e$ is
 injective.  Indeed $e$ is an embedding (though $r$ need not be a
strong homomorphism).

If $r$ is an endomorphism then $r$
is a retraction if and only if $r$ is idempotent: $r \comp r = r$;
this is to say that $r$ is the identity on its image.

It is false in general that if $\mA$ is a model of $T$ then  a
retraction of
  \mA is a model of $T$.  But it does hold if $T$ is geometric.

\begin{lemma} \label{geo-retract}
Let $T$ be a geometric theory, %
$\mA \models T$ , %
and $r : \mA \to \mB$ a retraction.
Then $\mB \models T$.
\end{lemma}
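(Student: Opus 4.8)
The plan is to use the retraction structure together with the preservation properties of positive-existential formulas. Let $e : \mB \to \mA$ be a homomorphism with $r \comp e = \id{B}$. Since $T$ is geometric, it suffices to verify each axiom $\forall \vec{x}.\ \alpha(\vec{x}) \to \beta(\vec{x})$ with $\alpha, \beta$ both PE. So fix such an axiom and fix elements $\vec{b}$ from $|\mB|$ with $\mB \models \alpha[\vec{b}]$; I must show $\mB \models \beta[\vec{b}]$.

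The key move is to transport the situation into $\mA$ along $e$, use the fact that $\mA \models T$ there, and transport back along $r$. First I would apply the embedding $e$: since $\alpha$ is PE and hence preserved by homomorphisms (by the theorem on homomorphisms and logical form, or directly from the definition of homomorphism), $\mA \models \alpha[e\vec{b}]$. Because $\mA \models T$ and $\mA$ satisfies the chosen axiom, we get $\mA \models \beta[e\vec{b}]$. Now apply $r$: again $\beta$ is PE, so it is preserved by the homomorphism $r$, giving $\mB \models \beta[(r \comp e)\vec{b}]$. Finally, $r \comp e = \id{B}$, so $(r \comp e)\vec{b} = \vec{b}$, and therefore $\mB \models \beta[\vec{b}]$, as required.

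I don't expect a serious obstacle here; the only point requiring a little care is making sure the PE-preservation is applied in the right direction at each step --- $\alpha$ pushed forward along $e$, then $\beta$ pushed forward along $r$ --- and noting that this argument uses only that $e$ and $r$ are homomorphisms (not that $e$ is an embedding or that $r$ is strong), together with the single equation $r\comp e = \id{B}$. One could phrase the whole thing slightly more slickly by observing that $e$ exhibits $\mB$ as $\homleq \mA$ and that PE sentences (the relativized instances of the axioms after plugging in the witnesses) transfer, but the direct three-line chase above is cleanest and self-contained.
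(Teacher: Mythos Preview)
Your proof is correct and follows essentially the same argument as the paper: transport $\alpha$ forward along $e$, use $\mA \models T$, transport $\beta$ forward along $r$, and conclude via $r \comp e = \id{B}$. Indeed, you have the direction of the section $e : \mB \to \mA$ stated correctly, whereas the paper's proof contains a typo writing it as $e : \mA \to \mB$.
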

\begin{proof}
  Let $e : \mA \to \mB$ satisfy $r \comp e = id_\mB$.  %
  Consider an axiom $\sigma$ of $T$   true in \mA
  \[
    \sigma \equiv \; \forall \vec{x} .\; \alpha(\vec{x}) \to \beta(\vec{x})
  \]
 where $\alpha$ and
 $\beta$ are positive-existential formulas.  %
  To show $\sigma$ is true in \mB, consider a tuple $\vec{b}$ of
  elements such that $\alpha[\vec{b}]$ is true in \mB.
  Since PE formulas are preserved by homomorphisms,
  $\mA \models \alpha[\vec {e b}]$.   Since $\mA \models \sigma$,
  $\mA \models \beta[\vec {e b}]$.
  Since PE formulas are preserved by homomorphisms,
  $\mB \models \beta[\vec { r e b}]$.  %
  Since  $r \comp e = id_\mB$,
  $\mB \models \beta[\vec {b}]$, as desired.  %
\end{proof}

The following will be useful later.
\begin{lemma} \label{power-endo-retract}
 If $\mA$ is finite and $h: \mA \to \mA$ is an endomorphism then some power of $h$ is a
 retraction.
\end{lemma}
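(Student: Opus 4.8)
The plan is to produce an explicit power $h^{n}$ of $h$ that is idempotent; by the remark recorded just above the statement (an endomorphism is a retraction iff it is idempotent), this is exactly what is needed. So everything reduces to finding $n \ge 1$ with $h^{n} \comp h^{n} = h^{n}$.

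First I would examine the descending chain of images
\[
  |\mA| \;\supseteq\; h(|\mA|) \;\supseteq\; h^{2}(|\mA|) \;\supseteq\; \cdots
\]
Since $\mA$ is finite, this chain stabilizes: there is an $m \ge 1$ such that $S \eqdef h^{m}(|\mA|)$ satisfies $S = h^{k}(|\mA|)$ for all $k \ge m$, and in particular $h(S) = h^{m+1}(|\mA|) = S$. Hence $h$ restricts to a map $h|_{S} : S \to S$ which is surjective, and a surjective self-map of a finite set is a bijection. Applying the elementary fact already used in the proof of \reflem{bi-inj-iso} (an injective self-map of a finite set has some power equal to the identity), we obtain $j \ge 1$ with $(h|_{S})^{j} = \id{S}$.

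Finally I would set $n = m j$. Because $n \ge m$, every $a \in |\mA|$ has $h^{n}(a) \in h^{n}(|\mA|) = S$; and since $h(S) = S$, iterating $h$ on an element of $S$ coincides with iterating $h|_{S}$, so for each $s \in S$ we get $h^{n}(s) = (h|_{S})^{n}(s) = \bigl((h|_{S})^{j}\bigr)^{m}(s) = s$, using $j \mid n$. Combining the two observations, for every $a$ we have $h^{2n}(a) = h^{n}\bigl(h^{n}(a)\bigr) = h^{n}(a)$, i.e.\ $h^{n} \comp h^{n} = h^{n}$, so $h^{n}$ is a retraction. The argument is essentially routine; the only point requiring any care is the bookkeeping that the chosen exponent $n$ simultaneously pushes every element into the stable image $S$ and is a multiple of the period $j$ of $h$ on $S$ — together with the small observation that a surjective self-map of a finite set is automatically bijective, which is what makes the period $j$ available.
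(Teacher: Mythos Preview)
Your argument is correct. It is, however, a genuinely different route from the paper's. The paper proceeds by induction on $|\mA|$: if $h$ is injective it is done, otherwise it passes to the image $\mA' = h(\mA)$, applies the induction hypothesis to $h' = h\!\restriction_{\mA'}$ to get $(h')^{n}$ idempotent, and then checks by a short calculation that $h^{2n}$ is idempotent. Your approach instead bypasses induction entirely by going straight to the \emph{eventual image} $S = h^{m}(|\mA|)$, observing that $h|_{S}$ is a bijection of a finite set and hence has a period $j$, and taking $n = mj$. What the inductive proof buys is brevity once the idempotence calculation is granted; what your direct proof buys is an explicit exponent and a slightly more transparent picture of \emph{why} the retraction exists (namely, $h^{n}$ collapses everything onto the stable image and acts as the identity there). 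Both arguments ultimately rest on the same elementary fact that an injective self-map of a finite set has finite order, which the paper already invoked in \reflem{bi-inj-iso}.
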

\begin{proof}
  By induction on the size of \mA.
  If $h$ is injective, then some power of $h$ is the identity, a
  retraction.

  Otherwise suppose that $h(e) = h(e')$ for some $e, e' \in |\mA|$.
  Let $\mA'$ be the image of $h$, and let $h'$ be $h$
  restricted to $|\mA'|$.

  Then $|\mA'| < |\mA|$ and $h'$ is an endomorphism of $\mA'$.
  By induction, for some $n$, $(h')^n$ is a retraction.
  It follows that $h^{2n}$ is a retraction; to show that it is
  idempotent we calculate:
  \begin{align*}
    h^{2n} \comp h^{n2} &= h^{3n} \comp h^{n} \\
           &= (h')^{3n} \comp h^{n} &\text{since $h \comp h = h' \comp
                                      h$} \\
           &= (h')^{n} \comp h^{n} &\text{since $ (h')^{2n} =
                                     (h')^{n}   $}  \\
           &= h^{2n} &\text{since $h' \comp h = h \comp
                                      h$}
  \end{align*}

\end{proof}

\begin{corollary} % Duplicate label \label{power-endo-retract}
  If $h: \mA \to \mA$ is a non-injective endomorphism then some power
  of $h$ is a proper retraction.
\end{corollary}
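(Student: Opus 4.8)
The plan is to deduce the corollary directly from the preceding \reflem{power-endo-retract}, using essentially the same bookkeeping that appeared in its proof. Given a non-injective endomorphism $h : \mA \to \mA$, \reflem{power-endo-retract} already guarantees that some power $h^m$ is a retraction, hence idempotent and surjective onto its image. What remains is to verify that this retraction is \emph{proper}, i.e.\ that it is not an isomorphism (equivalently, not injective, equivalently its image is a proper submodel), and this is where the non-injectivity of $h$ must be used.

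First I would fix $e, e' \in |\mA|$ with $e \neq e'$ but $h(e) = h(e')$, which exist by the assumption that $h$ is not injective. The key observation is that for \emph{any} $n \geq 1$, the map $h^n$ still identifies these two points: $h^n(e) = h^{n-1}(h(e)) = h^{n-1}(h(e')) = h^n(e')$. Hence no power of $h$ is injective; in particular the retraction $h^m$ supplied by \reflem{power-endo-retract} is non-injective, so its image $\mB$ has strictly smaller carrier than $\mA$, i.e.\ $\mB$ is a proper submodel of $\mA$ and $h^m : \mA \to \mB$ is a proper retraction. I would phrase the conclusion using the definition of proper submodel from the excerpt: $|\mB| \subsetneq |\mA|$ already suffices, regardless of the relational behaviour.

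One small point worth stating carefully: \reflem{power-endo-retract} gives the retraction as a power of $h$, say $h^m$ with $m \geq 1$, and I want to present $h^m$ itself as the witnessing proper retraction — so I should make sure the statement ``some power of $h$ is a proper retraction'' is read with the same indexing convention (powers with positive exponent) as in the lemma. The argument does not require $T$-geometricity or finiteness beyond what \reflem{power-endo-retract} already assumed, so nothing new needs to be invoked.

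The only mild obstacle is purely expository: being precise about what ``proper retraction'' means. Since a retraction $r : \mA \to \mB$ comes with an embedding $e : \mB \to \mA$ splitting it, and $r$ is always surjective, ``proper'' should mean $\mB \neq \mA$ — and by the chain $h(e)=h(e')$ surviving all powers, the image is strictly smaller than $\mA$, so $\mB$ is a proper submodel. I would close by remarking (or simply leaving implicit) that this is exactly the situation needed downstream: a non-injective endomorphism lets us strictly descend in the homomorphism preorder while staying within models of a geometric theory, by \reflem{geo-retract}.
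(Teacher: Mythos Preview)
Your argument is correct and is exactly the natural way to derive the corollary from \reflem{power-endo-retract}: apply the lemma to get a retraction $h^m$, then observe that the witnesses $e\neq e'$ to non-injectivity of $h$ remain identified by every positive power, so $h^m$ cannot be injective and its image is a proper submodel. The paper does not spell out a proof at all---it states the corollary immediately after the lemma and simply attributes the observation to Gottlob---so your write-up supplies precisely the details the paper leaves implicit, by the same route.
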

The corollary was observed by Gottlob in \cite{gottlob2005}.
It is useful in computing cores, as we will see later.

\begin{definition}
A submodel  \mC of \mA is a \emph{core} of \mA if there is a
retraction  $r: \mA \to \mC$ but no retract
$r': \mA \to \mC'$ for any proper submodel $\mC'$ of $\mC$.

   A model \mC is a \emph{core} if it is a core of itself.
\end{definition}

By \reflem{geo-retract}, if $T$ is a geometric theory, and $\mA$ is a
model of $T$ then the core of \mA is a model of $T$.

The following are well-known (see, for example, \cite{hell1992} in the case of graphs). %
\begin{lemma}  \label{core-tfae}
  Let \mA be finite.
  \begin{enumerate}
  \item \mA has a core.
  \item If \mC is a core of \mA then $\mA \homeq \mC$.
  \item A core of \mA is an induced submodel of \mA.

  \item \label{part:no-retracts} %
    \mC is a core if and only if it has no proper retracts.
  \item \label{part:no-endo}
    \mC is a core if and only if
    it has no proper endomorphisms (equivalently,
    every endomorphsim of \mC is an
    embedding;  equivalently
    every endomorphsim of \mC is an automorphism).
  \item
  A  submodel \mC of \mA is a core of \mA if
  there is an endomorphism $h: \mA \to \mC$ but no endomorphism
  $\mA \to \mC'$ for any proper submodel $\mC'$ of $\mC$.

  \item If \mC and \mC' are cores of a model \mA then \mC and \mC' are
    isomorphic.

  \end{enumerate}
\end{lemma}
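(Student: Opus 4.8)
The plan is to treat \reflem{power-endo-retract} --- some power of an endomorphism of a finite model is an idempotent retraction --- as the engine, and to prove the seven items in the order (4), (1), (2), (3), (5), (6), (7), reusing earlier items freely. I would first isolate one auxiliary fact: \emph{a core of \mA is automatically a core of itself}. Indeed, if \mC is a core of \mA and $p\colon\mC\to\mD$ is a retraction onto a proper submodel \mD, with section the inclusion, then composing $p$ with a retraction $r\colon\mA\to\mC$ and the two sections in the obvious way yields a retraction $\mA\to\mD$ onto a proper submodel of \mC, contradicting the defining property of a core of \mA. Given this, (4) is merely unwinding definitions: the identity is always a retraction $\mC\to\mC$, so ``\mC is a core of itself'' says exactly that no retraction out of \mC lands in a proper submodel, i.e.\ \mC has no proper retracts.

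For (1) I would use finiteness: \mA has only finitely many submodels, finitely many of which admit a retraction from \mA, and at least one does (\mA itself); any submodel minimal, under the submodel ordering, among these is a core of \mA by construction. Item (2) is immediate, since a retraction $r\colon\mA\to\mC$ and its section $e\colon\mC\to\mA$ witness both $\mA\homleq\mC$ and $\mC\homleq\mA$. Item (3) is short: $r\comp e=\id{\mC}$ means $r$ restricts to the identity on \mC, so for any relation symbol $R$ and tuple $\vec c$ from \mC, $\mA\models R[\vec c]$ forces $\mC\models R[r\vec c]=R[\vec c]$ since $r$ is a homomorphism; hence the inclusion $\mC\hookrightarrow\mA$ is a strong homomorphism and \mC is an induced submodel.

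The heart of the argument is (5). Forward: let \mC be a core and $h\colon\mC\to\mC$ an endomorphism. By \reflem{power-endo-retract}, some $h^n$ is an idempotent retraction onto its image, which is a retract of \mC; since \mC has no proper retracts (by (4)) that image is all of \mC, so $h^n$ is a surjective idempotent, hence the identity on \mC. Then $h^{n-1}$ is a two-sided inverse of $h$ which is itself a homomorphism, so $h$ is an automorphism, in particular an embedding. Conversely, if \mC is not a core it has a proper retract, via a retraction $p\colon\mC\to\mD$ with \mD a proper submodel; then $\iota\comp p\colon\mC\to\mC$ (with $\iota$ the inclusion) is an endomorphism that is not an automorphism, since if it were it would be bijective, forcing $|\mD|=|\mC|$, and then the relation-reflecting property of the strong homomorphism $\iota\comp p$ combined with the fact that $\mD\hookrightarrow\mC$ is a homomorphism would force $\mD=\mC$. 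That the three phrasings in (5) agree is because, for a finite model, an injective endomorphism is bijective and an embedding-endomorphism is a bijective strong homomorphism, hence an automorphism.

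Finally (6) and (7) combine (5), \reflem{power-endo-retract}, and the auxiliary fact, and it is here that I expect the main obstacle to recur: the repeated need to \emph{upgrade a bijective homomorphism to an isomorphism}. Bijectivity alone does not suffice (as already noted in the text), so each time one must exhibit the set-theoretic inverse as a composite of homomorphisms, which is precisely what the ``some power of $h$ is the identity'' consequence of \reflem{power-endo-retract} delivers; a related nuisance is that a proper submodel may differ from its parent only in which relations hold, so any claim that a surjective homomorphism is ``onto the whole model'' must separately rule out the same-domain case, using that inclusions are homomorphisms and that strong homomorphisms reflect relations. Concretely, for (6): if \mC is a core of \mA it is a core of itself, so a homomorphism $f\colon\mA\to\mD$ into a proper submodel \mD of \mC would, pre-composed with $\mC\hookrightarrow\mA$ and post-composed with $\mD\hookrightarrow\mC$, give an endomorphism of \mC with image inside \mD, contradicting that every endomorphism of \mC is onto; and a retraction $\mA\to\mC$ exists by definition. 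Conversely, restricting a given homomorphism $h\colon\mA\to\mC$ to an endomorphism $h'$ of \mC and taking an idempotent retraction $(h')^{m}$ onto a submodel \mD, composition with $h$ gives a homomorphism $\mA\to\mD$, so $\mD=\mC$ by hypothesis, whence $(h')^{m}$ is the identity, $h'$ is an automorphism, and $(h')^{-1}\comp h\colon\mA\to\mC$ is the required retraction. For (7): two cores \mC and \mC' of \mA receive homomorphisms both ways by (2); composing them in either order gives an endomorphism of a core, hence an automorphism by (5), and this forces each of the two homomorphisms to be a bijection whose inverse is a composite of homomorphisms, i.e.\ an isomorphism.
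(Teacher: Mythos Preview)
Your proposal is correct and follows essentially the same route as the paper: both use \reflem{power-endo-retract} as the engine, and your auxiliary fact (a core of \mA is a core of itself) together with your explicit care about proper submodels that share the parent's domain simply make precise what the paper leaves implicit or terse. The only noticeable divergence is in item~(7), where the paper restricts the two retractions to obtain homomorphisms $\mC \leftrightarrows \mC'$, observes that each composite is an injective endomorphism by part~(\ref{part:no-endo}), and then invokes \reflem{bi-inj-iso}, whereas you get automorphisms directly from~(\ref{part:no-endo}) and read off the inverse as a composite of homomorphisms.
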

\begin{proof} \hfill
  \begin{enumerate}
  \item This holds for simple cardinality reasons.
  \item This holds by definition of retraction.
  \item This holds since the retraction is the identity on the core.
  \item This is immediate from the definition.
  \item By \reflem{power-endo-retract}, a model \mC has no proper retracts if
    and only if it has no proper endomorphisms.  Now apply the
    previous part.
  \item If $\mC$ is a core of \mA then by definition there is an
    endomorphism $h: \mA \to \mC$; if there were
    $h': \mA \to \mC'$ to a  proper submodel $\mC'$ of $\mC$ then by
    \reflem{power-endo-retract}, there would be a retraction to \mC',
    contradicting the definition of core.

    If  $h: \mA \to \mC$ but there is no endomorphism
    $\mA \to \mC'$ for any proper submodel $\mC'$ of $\mC$, then
    \mC itself has no proper endomorphisms, so by part
    \ref{part:no-endo},  \mC is a core.

  \item Suppose $r: \mA \to \mC$ and   $r': \mA \to \mC'$ are
    retracts.
    Consider the map $r\restriction_{\mC'}$, $r$ restricted to \mC'.
    This is a homomorphism from \mC' to \mC.
    Similarly %
    $r'\restriction_{\mC} : \mC \to \mC'$.
    The composition %
    $(r\restriction_{\mC'}) \comp (r'\restriction_{\mC} )$
    is an endomorphism of \mC, hence is injective. %
    The composition in the other order is injective as well,
    so, as observed in \reflem{bi-inj-iso}, each is an isomorphism.
  \end{enumerate}
\end{proof}

\subsubsection{Computing Cores}
\label{sec:computing-cores}

Testing whether a model is a core is NP-complete.
So computing cores is apparently difficult, from a worst-case complexity
perspective.
But it is not difficult, using an SMT solver, to write a program that
behaves well in practice.

\begin{definition}
  If \mM is a finite model for signature $\Sigma$, the sentence %
  $\ehom{\mM}$, over the signature $\Sigma_h$ that extends $\Sigma$ by
  adding a new function symbol
  $h_s: S \to S$ at each sort $S$, is the conjunction of
  \begin{itemize}
  \item the diagram of \mM,
  \item the sentence expressing ``$h$ is a homomorphism'', and
  \item the sentence expressing ``$h$ is not injective.''
  \end{itemize}
\end{definition}
\begin{algo}[ComputeCore] \label{alg:core} \hfill
\begin{itemize}
\item[] \textbf {input:}  model $\mM$ over signature $\Sigma$
\item[] \textbf{output:}  a core \mP of \mM
\item[] \textbf{initialize:}
  Set \mP to be \mM
\item[] \textbf{while} $\ehom{\mP}$ is satisfiable
  \begin{itemize}
  \item[]
    let $\mP'$ be a model of $\ehom{\mP}$; \\
    let $\mP_0$ be the image of $\ehom{\mP}$ in \mP'; \\
    let $\mP$ be the reduct of $\mP_0$ to the original signature $\Sigma$
  \end{itemize}
  \item [] \textbf{return} \mP
\end{itemize}
\end{algo}
\begin{lemma}
 \refalg{alg:core} computes a core of its input.
\end{lemma}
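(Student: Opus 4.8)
The plan is to verify, in turn: (i) the \textbf{while} loop terminates; (ii) the returned model has no non‑injective endomorphism; and (iii) it is a core \emph{of} the input $\mM$, not merely a core in isolation.

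\emph{Termination and the loop invariant.} A model $\mP'$ of $\ehom{\mP}$ is, after taking the reduct to $\Sigma$, a copy of $\mP$ equipped with a homomorphism $h$ into itself that is not injective; so its image $\mP_0$ is a submodel of $\mP$ with $|\mP_0| < |\mP|$. Hence each pass of the loop strictly shrinks $\mP$, and since $|\mM|$ is finite the algorithm halts. Along the way I would maintain the invariant that $\mP$ is a submodel of $\mM$ and that there is a surjective homomorphism $\mM \to \mP$: this holds at initialization, and is preserved because the new $\mP_0$ is the image of a homomorphism out of the old $\mP$ — hence a submodel of it, and transitively of $\mM$ — while the corestriction $\mP \twoheadrightarrow \mP_0$ of that homomorphism, composed with the surjection $\mM \twoheadrightarrow \mP$ already in hand (and with the submodel inclusions, which are homomorphisms), yields a surjection $\mM \twoheadrightarrow \mP_0$. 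Thus at termination there is an endomorphism $g$ of $\mM$ whose image is the returned submodel $\mP$.

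\emph{Correctness at termination.} Now $\ehom{\mP}$ is unsatisfiable, which forces $\mP$ to have no non‑injective endomorphism: otherwise $\mP$ itself, expanded by such a map, would model $\ehom{\mP}$. Since $\mP$ is finite, an injective endomorphism of $\mP$ is a bijection, and a bijective homomorphism of a finite structure is an automorphism; so \emph{every} endomorphism of $\mP$ is an automorphism, i.e.\ $\mP$ is a core by \reflem{core-tfae}(\ref{part:no-endo}). To conclude that $\mP$ is a core of $\mM$ I would appeal to the clause of \reflem{core-tfae} stating that a submodel $\mC$ of $\mA$ is a core of $\mA$ as soon as there is an endomorphism $\mA \to \mC$ and none into a proper submodel of $\mC$: we have the endomorphism $g\colon\mM\to\mP$, and a homomorphism $\mM \to \mC'$ into a proper submodel $\mC' \subsetneq \mP$ would, restricted to $\mP$ and followed by $\mC' \hookrightarrow \mP$, be an endomorphism of $\mP$ factoring through $\mC'$ — impossible for an automorphism of $\mP$ unless $\mC' = \mP$.

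\emph{The main obstacle.} The one place that demands care is the bookkeeping around $\ehom{\mP}$. A model of it lives over the enlarged signature $\Sigma_h$ and might a priori contain elements not named by the diagram constants; one must therefore be explicit about why such a model's reduct can be taken to \emph{be} $\mP$ with a non‑injective endomorphism adjoined (via the domain‑closure content of the diagram, or the bounds machinery), so that ``the image of $h$ in $\mP'$'' really is a proper submodel of $\mP$ and the invariant above is meaningful. Granting that, everything else is a direct application of the facts about retractions, endomorphisms, and cores already assembled in \reflem{geo-retract}--\reflem{core-tfae}.
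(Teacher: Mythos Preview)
Your argument follows the same route as the paper---termination by strict decrease in $|\mP|$, then the characterization of cores via \reflem{core-tfae}(\ref{part:no-endo})---but you supply considerably more detail than the paper's two-sentence proof. In particular, you maintain a loop invariant and invoke clause~6 of \reflem{core-tfae} to establish that the output is a core \emph{of} $\mM$ (not merely a core in isolation), and you correctly flag the bookkeeping subtlety around possible extra elements in models of $\ehom{\mP}$; the paper leaves both points implicit.
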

\begin{proof}
  The algorithm terminates because the size of the model \mP decreases
  at each iteration.  The resulting model is a core by
  part~\ref{part:no-endo} of \reflem{core-tfae}.
\end{proof}

%% %%%%%%%%%%%%%%%%%%%%%%%%%%%%%%%%%%%%%%%%%%%%%%%%%%
\subsection{Minimality}
\label{sec:minimality}
% \begin{definition}
%   Let \modclass be a class of models.  A model $\mA \in \modclass$
%   is \emph{$\modcatsig$-minimal} for \modclass if whenever
%   $\mA \in \modclass$ and $\mA \homleq \mA$, we have
%   $\mA \homeq \mA$. %
%   Similarly for $\imodcatsig$.
% \end{definition}

\begin{definition}
  Let \modclass be a class of models closed under homomorphisms.

  A model \mA is \imin for $\modclass$ if it is a minimal element in the \ihomleq
  preorder on models in \modclass.    %
  Equivalently, whenever $\mB \ihomleq \mA$ then
  $\mB \ihomeq \mA$.

  A model \mA is \amin for $\modclass$ if it is a minimal element in the \homleq
  preorder on models in \modclass.    %
  Equivalently, whenever $\mB \homleq \mA$ then
  $\mB \homeq \mA$.

\end{definition}

Typically we are interested in the case when \modclass is the class of
models of a theory $T$; in this case we may use the phrases
$T$-minimal, or $T$-i-minimal.

Here are some local
characterizations  of $i$-minimality.
\begin{lemma} \label{imin-iso}
  The following are equivalent for a model \mA in \modclass.
  \begin{enumerate}
  \item \label{one} %
    $\mA$ is \imin for $\modclass$.
  \item \label{two} %
    No proper submodel of
    \mA is in \modclass.
  \item \label{three} %
    Whenever \mB is in \modclass and $k: \mB \to \mA$ is an injective
    homomorphism, then $h$ is an isomorphism.
  \end{enumerate}
\end{lemma}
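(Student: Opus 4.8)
The plan is to prove the cycle of implications $(1)\Rightarrow(3)\Rightarrow(2)\Rightarrow(1)$, each step being short once the definitions of submodel, \ihomleq, and \ihomeq are unwound. For $(1)\Rightarrow(3)$, suppose $\mA$ is \imin for \modclass and let $k\colon\mB\to\mA$ be an injective homomorphism with $\mB\in\modclass$. Then $\mB\ihomleq\mA$, so $i$-minimality of $\mA$ forces $\mB\ihomeq\mA$; in particular there is an injective homomorphism $g\colon\mA\to\mB$. Now $k$ and $g$ are injective homomorphisms running in opposite directions between finite models, so \reflem{bi-inj-iso} applies and gives that $k$ itself is an isomorphism.

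For $(3)\Rightarrow(2)$, I would argue by contradiction. Suppose $\mB$ is a proper submodel of $\mA$ with $\mB\in\modclass$. The inclusion $\mB\hookrightarrow\mA$ is, by definition of submodel, an injective homomorphism, so by (3) it is an isomorphism. I would then unpack what that forces: an isomorphism is in particular bijective, so $|\mB|=|\mA|$; and it has a homomorphism inverse, which here can only be the identity map on the common carrier. The homomorphism conditions on that inverse give $R^{\mA}[a_1,\dots,a_n]\Rightarrow R^{\mB}[a_1,\dots,a_n]$ and $f^{\mA}(a_1,\dots,a_n)=a\Rightarrow f^{\mB}(a_1,\dots,a_n)=a$ for every relation and function symbol; combined with the inclusion being a homomorphism, which gives the reverse implications, $\mA$ and $\mB$ interpret every symbol identically, so $\mB=\mA$, contradicting properness. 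Hence no proper submodel of $\mA$ lies in \modclass.

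For $(2)\Rightarrow(1)$, let $\mB\in\modclass$ with an injective homomorphism $k\colon\mB\to\mA$; I must show $\mB\ihomeq\mA$. As already observed in the text, an injective homomorphism into $\mA$ exhibits its source as isomorphic to a (not necessarily induced) submodel $\mB'$ of $\mA$ --- namely the carrier $k(|\mB|)$ equipped with the structure transported along $k$. Since \modclass is closed under homomorphisms it is in particular closed under isomorphism, so $\mB'\in\modclass$; then (2) forbids $\mB'$ from being a proper submodel of $\mA$, forcing $\mB'=\mA$. Hence $\mB\cong\mA$, so $\mA\ihomleq\mB$ as well, and therefore $\mB\ihomeq\mA$, as required for $\mA$ to be \imin.

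The two points that require care are the ones flagged above. First, in $(1)\Rightarrow(3)$ one cannot stop at ``there is an injective homomorphism each way''; to conclude that the \emph{given} map $k$ is an isomorphism one genuinely needs \reflem{bi-inj-iso}, and hence the finiteness of the models in our standing context. Second, in $(3)\Rightarrow(2)$ one must remember that in this paper an ``isomorphism'' is strictly stronger than a bijective homomorphism, so that the inclusion being an isomorphism really does collapse all the symbol interpretations and not merely equate the carriers. Everything else is routine bookkeeping.
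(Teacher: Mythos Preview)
Your proof is correct. The paper traverses the cycle in the opposite direction, $1\Rightarrow 2\Rightarrow 3\Rightarrow 1$: for $1\Rightarrow 2$ the inclusion of a proper submodel in $\modclass$ would violate $i$-minimality (this implicitly relies on \reflem{bi-inj-iso}, exactly the lemma you invoke explicitly in your $1\Rightarrow 3$); for $2\Rightarrow 3$ the image of an injective $k$ is a submodel of $\mA$ isomorphic to $\mB$, hence in $\modclass$, hence not proper, which forces $k$ to be an isomorphism (this is your $2\Rightarrow 1$ argument, stopping one step earlier); and $3\Rightarrow 1$ is immediate. The paper's route is marginally more economical because it sidesteps the careful unwinding of ``the inclusion is an isomorphism, therefore $\mB=\mA$'' that your $3\Rightarrow 2$ requires, but the two proofs rest on the same two substantive ingredients---the image-as-submodel construction together with closure of $\modclass$ under isomorphism, and the finiteness-based \reflem{bi-inj-iso}---so the difference is organizational rather than mathematical.
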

\begin{proof}

  For \ref{one} $\Rightarrow$ \ref{two}, if \mB were a proper submodel
  of \mA in \modclass then the inclusion map would contradict
  $i$-minimality of \mA.

  For \ref{two} $\Rightarrow$ \ref{three}, suppose
  $k: \mB \to \mA$ is
  injective, with $\mB \in \modclass$.  The image of $k$ in \mA is
  isomorphic with \mB, hence is in \modclass, so by \ref{two} is
  not a proper submodel.

  The implication \ref{three} $\Rightarrow$ \ref{one} is easy.
\end{proof}

\subsubsection{Submodel-minimality}

One could imagine yet another notion of minimality, where the preorder
on models is given by the submodel relation.  This notion could be described
without any reference to homomorphisms (the notion of submodel could
be defined natively, although we have not done so):
for a theory $T$,
a model \mA of $T$ is ``submodel-minimal'' for $T$ precisely if no
proper submodel of \mA is a model of $T$.

But \reflem{imin-iso} says that this notion is precisely the same as
$i$-minimality.

This observation will be useful when we turn to \emph{computing}
$i$-minimal models.

\subsubsection{Relationships between \amin and \imin}
\label{sec:relat-betw-amin}

\paragraph{An  \imin model is not necessarily \amin.}
\begin{example}
Let $T$ be
\[ \exists x . P(x) \; \land \; \exists x . Q(x)
\]

Let $\mA$ have one element $a$ with
\[\mA \models P[a] \land Q[a]
\]

Then \mA is \imin but not \amin.
The model \mB with two elements
$a_1$ and $a_2$ such that
\[
\mB \models A[a_1] \land B[a_2]
\]
is strictly below $\mA$ in the \homleq preorder.
(\mB is $a$-minimal for $T$.)
\end{example}

\paragraph{An  \amin model is not necessarily \imin.}
\begin{example}
Let $T$ be
\[ \exists x . P(x)
\]
Let $\mA$ have two elements $a_1$ and $a_2$ with
\[\mA \models P[a_1]
\text{ and }
\mA \models P[a_2]
\]
Then \mA is \amin. %
But \mA is not \imin:
the induced model determined by $a_1$
is a model of $T$.
\end{example}

\paragraph{However, an \amin model which is a core \emph{will} be \imin.}

\begin{lemma} \label{amin+core-is-imin}
  If $\mA$ is \amin for $T$ and is a core, then
\mA is \imin for $T$.
\end{lemma}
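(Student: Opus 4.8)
The plan is to argue directly from the definition of $i$-minimality. Let $\mB$ be a model of $T$ with $\mB \ihomleq \mA$; I must show $\mA \ihomleq \mB$, which yields $\mB \ihomeq \mA$ and hence that $\mA$ is minimal in the $\ihomleq$ preorder on models of $T$. So fix an injective homomorphism $k \colon \mB \to \mA$ witnessing $\mB \ihomleq \mA$.

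First I would pass to the coarser order: since $k$ is in particular a homomorphism, $\mB \homleq \mA$. Because $\mA$ is \amin for $T$ and $\mB$ is a model of $T$, this forces $\mB \homeq \mA$, so there is a homomorphism $g \colon \mA \to \mB$ in the other direction. The key step is then to compose these: $k \comp g \colon \mA \to \mA$ is an endomorphism of $\mA$. Since $\mA$ is a core, part~\ref{part:no-endo} of \reflem{core-tfae} tells us that every endomorphism of $\mA$ is an embedding, hence in particular injective; and an injective composite forces its inner factor $g$ to be injective. Thus $g$ witnesses $\mA \ihomleq \mB$, completing the argument. (If one prefers the characterization of $i$-minimality from part~\ref{three} of \reflem{imin-iso}, one then observes that $k$ and $g$ are both injective homomorphisms between finite models, so by \reflem{bi-inj-iso} each is an isomorphism, which is the stronger conclusion there.)

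There is essentially no hard part here: the single idea is to fold the two opposing homomorphisms $k$ and $g$ into an endomorphism of $\mA$, so that the core hypothesis becomes applicable via \reflem{core-tfae}. Finiteness of the models enters only through the facts already packaged into that lemma (and into \reflem{bi-inj-iso}), and — in contrast to the retraction results \reflem{geo-retract} and \reflem{power-endo-retract} — no geometricity assumption on $T$ is needed for this direction.
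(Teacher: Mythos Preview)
Your proof is correct and follows essentially the same approach as the paper: take the injective $k\colon \mB \to \mA$, use $a$-minimality of $\mA$ to obtain a homomorphism $g\colon \mA \to \mB$, compose to get an endomorphism of $\mA$, and invoke the core hypothesis to force $g$ injective. The paper's proof is the same argument in fewer words, without the extra parenthetical about \reflem{bi-inj-iso}.
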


\begin{proof}
  Suppose \mB is a model of $T$ and %
  $j: \mB \to \mA$ is injective.
  Since \mA is \amin, there is a homomorphism
  $h : \mA \to \mB$.
  The composition $j \comp h$ is an endomorphism of \mA.
  Since \mA is a core this map is injective, so
  $h$ is injective, and $\mA \ihomeq \mB$.
\end{proof}

Lemma~\ref{amin+core-is-imin} is attractive in the sense that it
suggests striving for the best of both worlds ($i$-minimality and
$a$-minimality simultaneously) by computing $a$-minimal cores.
Unfortunately, if a theory $T$ fails to be geometric then the core of
a model of $T$ can fail to be a model of $T$.  Thus we cannot in
general construct cores in our model-finding.

\begin{example}
  Let $T$ be the theory that says $\exists x. P(x)$ and that there
  exist exactly two elements (the latter sentence is not geometric).
  The model \mM with two elements, with $P$ holding of each of them,
  is $a$-minimal, but not $i$ minimal.  The model with two elements
  and $P$ holding of just one of them is $i$-minimal.   The core of
  $\mM$ has one element, with $P$ holding, but this is not a model of $T$.
\end{example}

\paragraph{$i$-minimizing an $a$-minimal model}

Suppose \mM is $a$-minimal.  Let \mK be an $i$-minimal model below
\mM, that is, \mK is $i$-minimal and there is an injective
$h : \mK \to \mM$.  Then \mK is both $i$-minimal and
$a$-minimal.  It is $a$-minimal because it is $a$-hom equivalent to
\mM (we have $h : \mK \to \mM$ by assumption, and there is a map
$\mM \to \mK$ by virtue of \mM being $a$-minimal).  So the true
best of both worlds is: first $a$-minimize, then $i$-minimize.

% The (minimal) model \mB with two elements
% two elements $a_1$ and $a_2$ such that
% \[
% \mB \models A[a_1] \land B[a_2]
% \]
% is strictly below $\mA$ in the \homleq preorder.

% Model $\mM_1$ is a-minimal and e-minimal.

% Model $\mM_2$ is \amin and \imin.

% Consider theory $T'$, with two axioms

% \begin{align*}
% \exists x .  C(x)  \lor (\exists x . A(x))
%                            \land (\exists x   . B(x))  \\
%  \forall x .  (A(x) \to \neg B(x))
% \end{align*}

% Consider \mN with one element $e$ with
% \[
% \mN \models C[e]
% \]

% This is e-minimal and a-minimal.

% Model $\mM_2$ above is also a model of $T_2$, and is both a-minimal
% and e-minimal.

\subsubsection{Computing  $i$-minimal Models}

As noted in \reflem{imin-iso}, finding  an \imin model for $T$ which
is
$\ihomleq$ given a model \mM is equivalent to finding
a minimal \emph{submodel} of \mM satisfying $T$.

This leads to the following procedure, originally developed for use in
the \emph{Aluminum} tool \cite{nelson_ICSE13}
\begin{itemize}
\item fix the universe
\item fix the negative literals
\item negate some positive literals
\end{itemize}
until no change.

Recall that $\domain{C}$ is a sentence expressing the fact that
every element of the domain(s) of a model is named by a constant in $C$.

For this algorithm we use the notation $\flip{\mP}$ to denote
\begin{align*}
& \bigwedge \dset{\neg \alpha} {\alpha \text{ is an atomic sentence, }
  \mP \models \neg \alpha}
\\ \land
& \bigvee \dset{\neg \beta} {\beta \text{ is an atomic sentence, }
  \mP \models \beta}
\end{align*}
Note in particular that if $c$ and $c'$ are constants naming distinct
elements of a model \mP, then $c \neq c'$ is one of the conjuncts of
$\flip{\mP}$.

\begin{algo}[i-Minimize] \label{alg:imin} \hfill
  \begin{itemize}
  \item[] \textbf{input:} theory $T$ and model $\mM \models T$
  \item[] \textbf{output:} model $\mP \models T$ such that %
    $\mN$ is \imin for $T$ and $\mP \ihomleq \mM$
  \item[] \textbf{initialize:} set \mP to be \mM
  \item[] \textbf{while}
    $T' \eqdef T \cup \{ \flip{\mP} \} $ is satisfiable
    \begin{itemize}
    \item[] set \mP to be a model of $T'$
    \end{itemize}
    \item[] \textbf{return} \mP

  \end{itemize}

\end{algo}
\begin{lemma}
  \refalg{alg:imin} is correct:  if \mM is a finite model of $T$ then %
      \refalg{alg:imin} terminates on \mM, and %
      the output \mP is an $i$-minimal model of $T$ with
      $\mP \ihomleq \mM$
\end{lemma}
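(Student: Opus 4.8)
The plan is to verify three things: that the loop terminates, that the returned model \mP satisfies $T$ and $\mP \ihomleq \mM$, and that \mP is $i$-minimal for $T$. Termination is the place to start. Each time the loop body executes, \mP is replaced by a model of $T' = T \cup \{\flip{\mP}\}$. I would argue that this strictly decreases \mP in the $\ihomle$ order: the disjunction of negated positive literals in $\flip{\mP}$ forces the new model to falsify at least one atomic sentence that held in the old \mP, while the conjunction of negated negative literals (which includes all the distinctness constraints $c \neq c'$ among the named elements) guarantees that the named elements stay distinct and no new positive atomic facts are forced—so the new model injects back into the old one but is not isomorphic to it. Hence we descend strictly in $\ihomle$, and by the well-foundedness lemma for theories with finitely many models (established earlier via a size-bounding axiom, which I would note is implicitly available here since we work with finite models and can bound by $|\mM|$), the loop must stop. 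Alternatively, and more simply, one observes that each iteration strictly decreases either the cardinality of the domain or the number of true positive atomic sentences, both of which are bounded.

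Next, correctness of the output as a model below \mM. When the loop exits, the current \mP is the last model assigned. Every model ever assigned is a model of $T$ (the initial one by hypothesis, each subsequent one because it models $T' \supseteq T$), so $\mP \models T$. For $\mP \ihomleq \mM$: I would show that the assignment sequence $\mM = \mP_0, \mP_1, \dots, \mP_n = \mP$ gives injective homomorphisms $\mP_{i+1} \to \mP_i$ at each step, because $\mP_{i+1} \models \flip{\mP_i}$ and, using the Lemma relating $\ihomleq$ to the $i$-characteristic sentence, a model satisfying the negative-literal conjunction of $\flip{\mP_i}$ together with the distinctness clauses maps injectively into $\mP_i$ via the constant-naming correspondence. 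Composing these gives $\mP \ihomleq \mM$.

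Finally, $i$-minimality. When the loop exits, $T \cup \{\flip{\mP}\}$ is unsatisfiable. By Lemma~\ref{imin-iso}, it suffices to show no proper submodel of \mP satisfies $T$. Suppose \mB were such a proper submodel. Then \mB differs from \mP either by omitting a domain element or by falsifying some positive atomic fact that \mP satisfies (using the same constant names for shared elements). In the first case some distinctness-style fact or a $\flip{\mP}$ conjunct is consistent with \mB; in the second case \mB satisfies the conjunction of negated negatives of $\flip{\mP}$ (it can only lose positive facts, never gain them, being a submodel) and satisfies at least one disjunct $\neg\beta$ of the disjunctive part. Hence $\mB \models \flip{\mP}$ and $\mB \models T$, contradicting unsatisfiability of $T'$. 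Therefore \mP has no proper submodel modeling $T$, so by Lemma~\ref{imin-iso} \mP is $i$-minimal.

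The main obstacle I anticipate is the bookkeeping around $\flip{\mP}$ and submodels: one must be careful that "$\mP_{i+1} \models \flip{\mP_i}$" is stated over the signature expanded with constants naming the elements of $\mP_i$, and that a submodel of \mP inherits the constant interpretations for its own elements so that the disjunctive clause $\bigvee \neg\beta$ genuinely fires. Getting the direction of the submodel/flip relationship exactly right—submodels can only drop positive atoms, so they automatically satisfy the conjunctive part of $\flip{\mP}$—is the crux, and it is precisely what makes the termination condition ($T'$ unsatisfiable) equivalent to "no proper submodel of \mP models $T$," hence to $i$-minimality by Lemma~\ref{imin-iso}.
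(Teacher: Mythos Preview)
Your proposal is correct and follows essentially the same route as the paper: termination by strict descent in the $\ihomle$ order (the paper's proof says exactly this in one line), and $i$-minimality by invoking \reflem{imin-iso} to reduce to submodel-minimality and then reading that off from the form of $\flip{}$. You have simply unpacked in detail what the paper compresses into three sentences, and your closing paragraph about the constant-naming bookkeeping correctly identifies the only real subtlety, which the paper leaves implicit.
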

\begin{proof}
   Each iteration goes down in the $\ihomle$ ordering,
  thus termination.   To show that the result is \imin for $T$, it
  suffices, by \reflem{imin-iso}, to argue that the result is a
  minimal $T$-submodel of the input, under the submodel ordering.   But
  this is clear from the definition of the sentences $\flip{}$.
\end{proof}

\subsubsection{Computing  $a$-minimal Models}

What about computing a-minimal models?  That's harder.
First of all, for a given theory there might be no finite a-minimal
models at all.  Example: the theory with one unary function and no
axioms.   The initial (hence unique minimal) model of this theory is
the natural numbers.
Another way to put this is:   the $\homle$ preorder is not well-founded in
general.

If we bound the size of the domain(s) of our models then a-minimal
models exist: the $\homle$ preorder is well-founded, so the set of
minimal elements with respect to this order is non-empty.
The question is, how to compute \amin models?

 We show how to do two things
\begin{enumerate}
\item Given model $\mA$ define a  sentence
  $\homTo{\mA}$ such that %
  $\mB \models \homTo{\mA}$ if and only if $\mB \homleq \mA$.

%Call this the ``hom-to'' sentence for \mA.

\item Given model $\mA$ define a sentence
$\homFrom{\mA}$ such that
$\mB \models \homFrom{\mA}$ if and only if $\mA \homleq \mB$.

We define
 $\avoid{\mA}$ to be $\neg \homFrom{\mA}$, so that
$\mB \models \avoid{\mA}$ if and only if
  $\mA \not \homleq \mB$.

%Call this the ``hom-from'' sentence for \mA.

\end{enumerate}
Then the process of finding an $a$-minimal model for $T$ below a given model
$\mA$ is to iterate the process of constructing a model that is
strictly below \mA in the \homleq ordering, which is to say, a model
\mB such that there is a homomorphism from \mB to \mA but not
homomorphism from $\mA \to \mB$.

That is:
\begin{quote}
\em
  given $\mA \models T$, construct the theory %
  \[
    T \cup \{ \homTo{\mA} \} \cup \{ \avoid{\mA} \}
\]
If this is unsatisfiable then \mA is $T$ minimal. %
If this is satisfiable, let \mA' be a model of this theory, and iterate.
\end{quote}

Important: this process is not guaranteed to terminate for an
arbitrary $T$.  But as observed earlier,
if at the outset we bound the size of the models to be considered,
the process will terminate.

The challenge---suggested earlier---is to define hom-to and hom-from
sentences with as much ``existential'' character as we can manage.

%% ========================================
\subsubsection{Hom To}

This is straightforward ``solver programming''.

Given model \mM.  We want to characterize those \mB such that there is a
hom $h : \mB \to \mM$, by constructing a sentence
$\homTo{\mM}$.

% We \emph{do} want to require $\beta_K$ in that process.  Since: given
% \mM we are going to iterate the process of finding models \mB with
% \[
% \mB \to \mM \not\to \mB
% \]
%   Without $\beta_K$ the hom
% ordering is not well-founded so we won't get terminate and get minimal models.
% But the use of $\beta_K$ isn't part of $\homTo{\mM}$.
% Let $\mM \models \sig$ be given

% Make an expanaded signature $\sig^+$ of \sig, by adding to \sig:
% \begin{itemize}
% \item a new constant symbol $e : S$ for each element $e$ of $|\mM|$ at
%   sort $S$
% \item for each sort $S$, a new unary function symbol $h :
%   S \to S$
% \end{itemize}

% Make the following set of sentences, one for each
% function symbol $f$ or $R$ in $\Sigma_u$.
% The new constants $c'$ play the role of elements of \mM; we are
% searching for a model that will give interpretations to the $c$ (and
% the $f$ and $R$).
% %
% \begin{align}
% &\forall \vec{x} , y .\; f \vec{x} = y
%   \implies
%     \bigvee \{ (\vec{hx} = \vec{e'} \land y = e') \mid
%     \mM \models f \vec{e} = e  \}
%   \\
% &\forall \vec{x} .\; R \vec{x} = true  \implies
% \bigvee \{ (\vec{hx} = \vec{e'} ) \mid
%  \mM \models R \vec{e} = true \}
% \end{align}
% Set $\homTo{\mM}$ to be the conjunction of those sentences.

\begin{algo}[HomTo] \hfill
\begin{itemize}
\item[] \textbf {input:} model $\mM$ over signature $\Sigma$.
\item[] \textbf{output:} sentence $\homTo{\mM}$ in an expanded signature $\Sigma^{+}$,
  such that for any model $\mP \models \Sigma$,
  $\mP \homleq \mM$ iff there is an expansion $\mP^+$ of $\mP$ to $\Sigma^+$
  with $\mP^+ \models \homTo{mM}$.
\end{itemize}
\begin{enumerate}
\item[] \textbf{define}  $\Sigma^+$ to be the extension of $\Sigma$
  obtained by
  \begin{itemize}
  \item[] adding a set of fresh constants in one-to-one correspondence
    with the elements of the domain of $\mM$
  \item[] adding a function symbol $h_S : S \to S$ at each sort $S$
  \end{itemize}
\item[] \textbf{define}  $\homTo{\mM}$ as the conjunction of the
  following sentences, one for each
  function symbol $f$ and predicate $R$ in $\Sigma$.
  \begin{align}
    &\forall \vec{x} , y .\; f \vec{x} = y
      \implies
      \bigvee \{ (\vec{hx} = \vec{e} \land y = e') \mid
      \mM \models f \vec{e} = e'  \}
    \\
    &\forall \vec{x} .\; R \vec{x} = true  \implies
      \bigvee \{ (\vec{hx} = \vec{e} ) \mid
      \mM \models R \vec{e} = true \}
  \end{align}
\end{enumerate}
\end{algo}
\begin{lemma}
  Suppose $\mM$ and $\mB$ are $\sig$ models.
  There is a $\sig$ hom $h: \mB \to \mM$ iff
  there is a model $\mB^{+} \models \homTo{\mM}$ such that
  \mB is the reduction to \sig of $\mB^{+}$.
\end{lemma}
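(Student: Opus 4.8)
The plan is to prove both directions by unwinding the definition of homomorphism against the syntactic shape of the conjuncts of $\homTo{\mM}$. Recall that $\sig^{+}$ adds, for each element $e$ of $\mM$, a constant $c_{e}$, and, at each sort $S$, a unary function symbol $h_{S}\colon S\to S$; so an expansion $\mB^{+}$ of $\mB$ to $\sig^{+}$ is precisely the data of a choice of element $c_{e}^{\mB^{+}}\in|\mB|$ for every $e\in|\mM|$ together with a sorted map $h^{\mB^{+}}\colon|\mB|\to|\mB|$. The lemma thus says that such data satisfying the displayed axioms is ``the same thing as'' a homomorphism $\mB\to\mM$. This is the expanded-signature analogue of the earlier lemma characterizing $\mM\homleq\mN$ by $\mN\models\ch{\mM}$: there a homomorphism is a witnessing instantiation of the existential variables of $\ch{\mM}$; here it is a witnessing interpretation of the auxiliary symbols $h$ and $c_{e}$.

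For $(\Rightarrow)$, suppose $g\colon\mB\to\mM$ is a homomorphism. I would build $\mB^{+}$ by interpreting $c_{e}^{\mB^{+}}$ as some fixed element of $g^{-1}(e)$ when that fibre is nonempty, and as an arbitrary element of $|\mB|$ otherwise, and by setting $h^{\mB^{+}}(b):=c_{g(b)}^{\mB^{+}}$ for each $b\in|\mB|$. To verify the relational conjunct for $R$: if $\mB\models R[\vec b]$ then $\mM\models R[g\vec b]$ by the homomorphism property, so the disjunct indexed by the tuple $g\vec b$ is among those listed, and $h^{\mB^{+}}(b_{i})=c_{g(b_{i})}^{\mB^{+}}$ by construction, so that disjunct holds. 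The functional conjunct for $f$ is verified the same way, additionally using $g(f^{\mB}(\vec b))=f^{\mM}(g\vec b)$ to discharge the clause recording the function value. Finally the $\sig$-reduct of $\mB^{+}$ is $\mB$ by construction, as required.

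For $(\Leftarrow)$, suppose $\mB^{+}\models\homTo{\mM}$ has $\sig$-reduct $\mB$; let $N=\{\,c_{e}^{\mB^{+}}:e\in|\mM|\,\}\subseteq|\mB|$ be the set of named elements. Using the functional conjuncts together with totality of the $\sig$-operations, one first shows $h^{\mB^{+}}$ maps all of $|\mB|$ into $N$. One then defines $g\colon|\mB|\to|\mM|$ by sending $b$ to an element $e\in|\mM|$ with $c_{e}^{\mB^{+}}=h^{\mB^{+}}(b)$; the satisfied disjunctions guarantee that such an $e$ exists and can be chosen so as to simultaneously discharge every relational and functional constraint bearing on $b$, and these are exactly the clauses asserting that $g$ preserves $R$ and $f$. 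Hence $g$ is a homomorphism $\mB\to\mM$.

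The delicate point is the extraction of one coherent homomorphism $g$ in the $(\Leftarrow)$ direction. Because $g$ may be non-injective, the $c_{e}$ cannot in general be forced to name distinct elements of $|\mB|$, and because $g$ may be non-surjective some $c_{e}$ are placed arbitrarily; one must therefore argue from the conjuncts of $\homTo{\mM}$ that whatever element $e$ one reads off from $h^{\mB^{+}}(b)$ can be taken to witness \emph{all} the disjunctions in which $b$ participates at once. This is precisely where the exact strength of the axioms comprising $\homTo{\mM}$ has to be matched against the definition of homomorphism, and it is the step I would spell out in full detail; the remaining verifications are routine unwinding.
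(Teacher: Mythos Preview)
Your $(\Leftarrow)$ direction has a genuine gap, precisely at the ``delicate point'' you flag but do not resolve. Under your reading of the expansion---with each $c_e$ interpreted as an element of $|\mB|$ and $h$ as a map $|\mB|\to|\mB|$---nothing forces distinct constants to name distinct elements, and when they collide the coherent choice of $g$ you need can fail to exist. Concretely: let $\mM=\{e_1,e_2\}$ with unary predicates satisfying $R(e_1)$ and $P(e_2)$ only, and let $\mB=\{b\}$ with $R(b)\wedge P(b)$. There is no homomorphism $\mB\to\mM$. Yet the expansion $\mB^{+}$ with $c_{e_1}^{\mB^{+}}=c_{e_2}^{\mB^{+}}=b$ and $h^{\mB^{+}}(b)=b$ satisfies every conjunct of $\homTo{\mM}$: the $R$-clause is witnessed by the disjunct for $e_1$, the $P$-clause by the disjunct for $e_2$, and both reduce to $b=b$. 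So on your interpretation the biconditional is simply false; the satisfied disjunctions do \emph{not} guarantee that the witnesses at different predicates can be made to agree on a single target for $b$.

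The paper's proof sidesteps this entirely by taking the interpretation of $h$ in $\mB^{+}$ to \emph{directly} define a function $|\mB|\to|\mM|$, rather than going through the constants as an intermediate layer. With that reading, each $h(b)$ already names a specific element of $\mM$, no coherence problem arises, and the conjuncts of $\homTo{\mM}$ are then literally the homomorphism clauses for this function---so both directions are one line. Whether the formal construction as written (with $h_S:S\to S$ and constants living inside $|\mB|$) fully supports that reading is a reasonable question to raise about the paper, but in any case your more scrupulous unpacking introduces an obstruction that the intended argument does not face. To repair your approach you would need either an explicit distinctness axiom on the $c_e$ or an encoding in which $h$ lands in a separate copy of $|\mM|$.
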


\begin{proof}
  Suppose \mB is the reduction of  $\mB^{+} \models \homTo{\mM}$.
  The interpretation of $h$ in $\mB^{+}$ defines a function from $|\mB|$
 to $|\mM|$.   %
 We want to show $h$ is actually a $\Sigma_u$ hom. %
 But that's just what $\homTo{\mM}$ does.

Suppose $\mB \models \sig$ and there is a hom %
$h : \mB \to \mM$.  %
We want to show that there is an expansion $\mB^+$ of \mB
satisfying
 $\mB \models \homTo{\mM}$. %
The actual homomorphism $h$ determines the interpretation  in
$\mB^{+}$ of the symbol
$h$
and the interpretation of the new constants $c'$.
And since $h$ is a homomorphism, the clauses in $\homTo{\mM}$ are
satisfied.
\end{proof}

%% Note to myself
% It seems: we can't use the ``optimized'' representation of \mM in this
% construction.  It is crucial that the defs in the primary model rep
% don't have any $\Sigma_u$ symbols except at the top level.  Otherwise
% they would show up on the rhs of the implications in $T_{\to \mM}$
% and be part of what is being constructed as we try to build $\mB$,
% but we would \emph{want} them to refer to  the \mM interpretation.
% (That's why the old method worked with a variant-copy of $\Sigma_{u}$)

\subsubsection{Hom From and Avoid Cone}
\label{hom-from-and-to}
Our goal is: given a model $\mM$, find a formula to capture
\textbf{not} being in the hom-cone of  $\mM$.

\emph{This is more interesting that the hom-to problem, because we are
  going to \textbf{negate} the sentence we build, to express
  hom-cone-avoidance.  So we want to minimize the number of
  existential quantifiers we use here.}

The ideal outcome would be to construct an existential sentence capturing
the complement of the hom cone of \mM.   %
 Equivalently we might look for a structure $\mB$ such that for any
$\mA$, $\mM \homleq \mA$ iff $\mA \not \homleq \mB$.   This is called ``homomorphism
duality'' in the literature.
Such a structure  doesn't
always exist; and even when it does, it can be exponentially large in
the size of \mM \cite{ErdosPTT17}.  So we turn to heuristic methods.

The strategy is to construct a sentence guaranteed to capture the
hom-from problem, then refine this sentence to eliminate (some) quantifiers.

Start with the $C$-rules of the standard model rep for \mM
\begin{align*}
  \text{equations} & \quad f \vec{c} \to c
\end{align*}
where the $c$ are over $\Sigma_{uK}$
but there is exactly once $c$ per element.

Convert that  to the \sig sentence
\[
  \rep{\mM} \;  \eqdef \;
  \exists \vec{x} , \;f \vec{u}  = u
\]
by replacing the $c$  by variables.

\begin{lemma} \label{modelrep-correct}
  Let $\mM$ and $\mF$ be $\sig$ models.
  Then $ \mM \homleq \mF$ iff
  $\mF \models \rep{\mM}$.
\end{lemma}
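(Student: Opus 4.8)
The plan is to read a homomorphism directly off a satisfying assignment, and conversely. Recall that $\rep{\mM}$ is the existential closure of the conjunction, taken over all elements $e$ of $|\mM|$ (equivalently, over the constants $c_e$ naming those elements, one per element), of the rep-equations $f\,\vec c \to c$ that hold in the standard presentation of $\mM$ — where, under the ``everything is a term'' convention fixed earlier, a relational atom $R\vec c$ is recorded as an equation $R\vec c = \truth$. Since $\mM$ is finite, this matrix is a finite conjunction and $\rep{\mM}$ is a genuine first-order sentence; moreover, after unwinding the $\truth$-valued encoding, $\rep{\mM}$ is logically equivalent to the characteristic sentence $\ch{\mM}$, so the claim is essentially a repackaging of the earlier equivalence $\mM \homleq \mN \iff \mN \models \ch{\mM}$. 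I would nonetheless give the direct argument, both because it is short and because it pins down the conventions.

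First I would prove the forward direction: assume $\mM \homleq \mF$, witnessed by a homomorphism $h : |\mM| \to |\mF|$. To exhibit a witness for $\mF \models \rep{\mM}$, interpret the existential variable $x_e$ (the one that replaced the constant $c_e$) by $h(e)$. A typical conjunct of the matrix of $\rep{\mM}$ has the form $f(x_{e_1},\dots,x_{e_n}) = x_e$ for some function symbol $f$ with $\mM \models f[e_1,\dots,e_n] = e$; under the chosen assignment this becomes the assertion $\mF \models f[h(e_1),\dots,h(e_n)] = h(e)$, which is exactly clause~(1) of the definition of homomorphism. A relational conjunct $R(x_{e_1},\dots,x_{e_n}) = \truth$, present because $\mM \models R[\vec e]$, becomes $\mF \models R[h\vec e]$, which is clause~(2). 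Hence every conjunct of the matrix holds in $\mF$ under this assignment, so $\mF \models \rep{\mM}$.

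Conversely, assume $\mF \models \rep{\mM}$ and fix a witnessing tuple $\vec a$ for its existential variables; set $h(e) \eqdef a_e$. This is well-defined precisely because the presentation uses exactly one constant per element of $\mM$. To see $h$ is a homomorphism: for clause~(1), given any $f$ and any tuple $\vec e$ from $|\mM|$, let $e \eqdef f^{\mM}(\vec e)$; since $f^{\mM}$ is total, the equation $f(x_{e_1},\dots,x_{e_n}) = x_e$ is one of the conjuncts of $\rep{\mM}$, so the witnessing assignment gives $\mF \models f[h(e_1),\dots,h(e_n)] = h(e)$, i.e. $f^{\mF}(h\vec e) = h(f^{\mM}(\vec e))$. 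For clause~(2), if $\mM \models R[\vec e]$ then $R(x_{e_1},\dots,x_{e_n}) = \truth$ is a conjunct, so $\mF \models R[h\vec e]$. Thus $h : \mM \to \mF$ is a homomorphism and $\mM \homleq \mF$. The only step needing care is the bookkeeping in the first sentence of the plan: making the ``standard model rep'' precise enough that its $C$-rule equations exhaust the positive diagram of $\mM$ — in particular that relations are captured via the $\truth$-valued encoding, and that the rep contains an equation for $f^{\mM}$ at \emph{every} argument tuple (this totality is what makes clause~(1) of the homomorphism definition fall out in the converse direction). With those conventions fixed, both directions are immediate and no model theory beyond finiteness of $\mM$ is used.
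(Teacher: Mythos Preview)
Your proof is correct and takes essentially the same approach as the paper's: in the forward direction you instantiate the existential variables by $h(e)$ and verify the conjuncts (the paper phrases this as ``homomorphisms preserve positive-existential formulas''), and in the converse you read the function $h$ off the witnesses and use the fact that the $C$-rules exhaust the graph of each function and the true atoms of each predicate. Your version is more explicit about the bookkeeping---in particular about the $\truth$-valued encoding of relations and the totality of $f^{\mM}$---but the structure of the argument matches the paper's.
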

\begin{proof}
  If there is a hom $h: \mM \to \mF$: use the fact that homomorphisms preserve
  positive existential formulas.  (The image of $h$ on the $\vec{x}$
  says how to interpret the $\vec{x}$ in $\mF$.)

 If
 $\mF \models \rep{\mM}$ :
 since there is one variable per
 element of $\mM$, the interpretation of the
 $\vec{x}$ says how to define
 a function $h : |\mM| \to |\mF|$.
 The fact that the equations comprising the body  of
 $\rep{\mM}$ completely describe the graphs in \mM of the non-Boolean
 functions of \sig and the tuples making the predicates true in \mM
 ensure that $h$ makes a homomorphism.
\end{proof}

That's fine, but there are as many existential quantifiers in
$\rep{\mM}$ as there are domain elements.  If we were to take $\homFrom{\mM}$ to
be $\rep{\mM}$, and  define $\avoid{\mM}$ by
simply negating this would lead to a sentence inconvenient for the SMT
solver.  We can compress the representation, though.  This will lead
to a nicer representation sentence, which we will take as $\homFrom{mM}$.

%% ============================================================

\begin{algo}[HomFrom]\label{alg:hom from} \hfill
  \begin{itemize}
  \item[] \textbf {input:}  model $\mM$ over signature $\Sigma$
  \item[] \textbf{output:} sentence $\homTo{\mM}$ over signature $\Sigma$, such that
    for any model $\mP \models \Sigma$, $\mM \homleq \mP$ iff $\mP \models \homFrom{\mM}$.
  \item[] \textbf{comment:} sentence $\homFrom{\mM}$ is designed to use as few
    existential quantifiers as possible,
    in a ``best-effort'' sense.
  \item [] \textbf{initialize:}
    Set  sentence $\homFrom{\mM}$ to be $\rep{\mM}$, %
    the standard model representation sentence
    for \mM.
  \item[] \textbf{while}
    there is a conjunct in the body of $\homFrom{\mM}$ of the form
    \[
      f(t_1, \dots, t_n) = x
    \]
    such that  $x$ does not occur in any of the $t_i$,
    \begin{itemize}
    \item[]
      replace all
      occurrences of $x$ in $\homFrom{\mM}$ by $f(t_1, \dots, t_n)$.
      Erase the resulting trivial equation
      $f(t_1, \dots, t_n) = f(t_1, \dots, t_n) $
      and erase the $(\exists x)$ quantifier in front.
    \end{itemize}
  \end{itemize}
\end{algo}
\newpage
\begin{lemma}
  For any model $\mP \models \Sigma$, $\mM \homleq \mP$ iff $\mP \models \homFrom{\mM}$.
\end{lemma}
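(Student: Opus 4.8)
The plan is to reduce everything to \reflem{modelrep-correct}, which already establishes that $\mM \homleq \mP$ iff $\mP \models \rep{\mM}$ for every $\sig$-model $\mP$. So it suffices to show two things: that \refalg{alg:hom from} terminates, and that the sentence it returns is logically equivalent to $\rep{\mM}$. Termination is immediate: each pass through the \textbf{while} loop deletes one existential quantifier together with the conjunct $f(t_1,\dots,t_n) = x$ that pinned it down, so the number of existentially quantified variables in the body strictly decreases and the loop halts after finitely many iterations.

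For correctness I would maintain the invariant that the sentence currently stored in $\homFrom{\mM}$ is logically equivalent to $\rep{\mM}$. Since logical equivalence is transitive, it is enough to verify that one iteration of the loop body preserves it, i.e.\ that
\[
  \some{x}\bigl(f(t_1,\dots,t_n) = x \;\wedge\; \theta\bigr)
  \quad\eee\quad
  \theta[x := f(t_1,\dots,t_n)],
\]
where $\theta$ gathers the remaining conjuncts and existential quantifiers and $x$ occurs in none of the $t_i$. This is just the one-point rule, which I would prove semantically in both directions: in any model under any assignment satisfying the left-hand side, the witness for $x$ is forced to equal the value of $f(t_1,\dots,t_n)$, so $\theta$ holds of that value, which is exactly $\theta[x := f(t_1,\dots,t_n)]$; conversely, from a model of the right-hand side one recovers a witness for $x$ by interpreting it as the value of $f(t_1,\dots,t_n)$. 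Along the way I would note the small points of hygiene that make the loop body well-defined: the guard $x \notin t_i$ is exactly what guarantees that $f(t_1,\dots,t_n)$ contains none of the variable being eliminated, so the substitution genuinely removes $x$ (no cyclic term, no capture); the resulting conjunct $f(t_1,\dots,t_n) = f(t_1,\dots,t_n)$ is valid and may be erased; and after the substitution the object is still a sentence (the variables occurring in the $t_i$ remain bound by their own quantifiers) whose body is a conjunction of atomic formulas, so the loop guard continues to make sense. Predicate conjuncts $R(\vec x)$ are never selected by the guard but are carried along under the substitution without incident.

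Putting the pieces together: when the loop exits we have $\homFrom{\mM} \eee \rep{\mM}$, hence for any $\mP \models \sig$, $\mM \homleq \mP$ iff $\mP \models \rep{\mM}$ (by \reflem{modelrep-correct}) iff $\mP \models \homFrom{\mM}$. The only step requiring genuine care is the one-point-rule equivalence, and in particular the observation that the side condition $x \notin t_i$ is precisely the condition that licenses the quantifier elimination; everything else is bookkeeping.
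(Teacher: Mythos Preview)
Your proof is correct and follows essentially the same route as the paper: reduce to \reflem{modelrep-correct} for the initialization, then show that each loop iteration preserves logical equivalence via the one-point rule, argued semantically in both directions with the side condition $x \notin t_i$ ensuring the substitution is well-defined. You add an explicit termination argument and some extra hygiene remarks that the paper omits, but the core argument is identical.
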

\begin{proof}
  By \reflem{modelrep-correct} the assertion is true at the
  initialization step.  So it suffices to observe that each
  transformation of $\homFrom{\mM}$ yields a logically equivalent
  sentence.

  We may write $\homFrom{\mM}$ as
  \[
    \exists x y_1 \dots y_n .   f(t_1, \dots, t_n) = x \land \beta(x, \vec{y})
  \]
  so that the transformed sentence is
  \[
    \exists y_1 \dots y_n . \beta[x :=  f(t_1, \dots, t_n)]( \vec{y})
  \]
  Suppose $\mP$ satisfies the first sentence with environment
  $ \eta = x \mapsto a, \vec{y} \mapsto \vec{b}$.
  Then $\mP$ satisfies the second sentence with
  $\eta' = \vec{y} \mapsto \vec{b}$, since
  $\mP \models   f(t_1, \dots, t_n) = x$ under $\eta$.

  Suppose $\mP$ satisfies the second sentence with environment
  $ \delta = \vec{y} \mapsto \vec{b}$. %
  Then $\mP$ satisfies the first sentence with %
  $\delta' = x \mapsto f(\delta t_1, \dots \delta t_n), \vec{y}
  \mapsto \vec{b}$ (this is a suitable environment because $x$ does
  not occur in $ f(t_1, \dots, t_n)$.)
\end{proof}

\begin{example}
Start with (suppressing the $\land$ between equations)
\begin{align*}
  \exists x_0x_1 x_2 \; .\; f x_0 &\to x_2 \\
  f x_1 &\to x_0 \\
  f x_2 &\to x_1 \\
  c &\to x_2
\end{align*}
If we work on the equations in the order given we get
\begin{align*}
  \exists x_0x_1 \; .\;   f x_1 &\to x_0 \\
  f f x_0 &\to x_1 \\
  c &\to f x_0
\end{align*}
\begin{align*}
  \exists x_1 \; .\;
  f f f x _1 &\to x_1 \\
  c &\to f f x_1
\end{align*}
and the end result is
$
  \exists x_1 \; .\;
  (f f f x _1 =  x_1 ) \land (c = f f x_1).
$
\end{example}

The order in which we do these rules matters.
Here is an improved algorithm for
applying the rules.  %

Construct a graph in which the nodes are the variables occuring in the
set of equations, and in which, if $fx_1\dots x_n \to x$ is a rule, then
there is an edge from each $x_i$ to $x$.  Then start at the (definitions
for the) sources of this graph, and proceed along the graph, that is
optimal (conjecture).

\begin{example}
With the same starting point as above:
\begin{align*}
  \exists x_0x_1 x_2 \; .\; f x_0 &\to x_2 \\
  f x_1 &\to x_0 \\
  f x_2 &\to x_1 \\
  c &\to x_2
\end{align*}
 Making the graph as defined above we have
$a_2$ is a source, then $a_1$ then $a_0$.
If we do things in that order we get
\begin{align*}
  \exists x_0x_1 \; .\;   f x_0 &\to c \\
  f x_1 &\to x_0 \\
  f c &\to x_1
\end{align*}
\begin{align*}
  \exists x_0 \; .\;   f x_0 &\to c \\
  f f c &\to x_0
\end{align*}
\begin{align*}
 f f f c&\to c
\end{align*}
This is ideal.
\end{example}

There is an interesting connection here with \emph{initial} models,
that is, those models that are free, in the categorical sense, over
the empty set of generators, among all the models of a theory.  %
For such a model, (i) everything named by a closed term, (ii) no
equations between elements that are not forced by $T$, (iii) no
predicate facts true that are not forced by $T$.  %
If \mM is initial then there is a hom from it to \emph{any} model, so the
compressed representation sentence could just be ``true''.

In general the measure of how much a model fails to be initial is
given (i) elements not named by terms, (ii) unforced equations between
elements (iii) unforced atomic sentences.  Capturing those facts about
a model characterizes the homs possible out of it; this is what the
compressed representation does.

\subsubsection{Computing \amin Models}

Our work on homTo and homFrom leads to an algorithm for $a$-minimality.

%% ============================================================
\begin{algo}[a-Minimize] \label{alg:a-min} \hfill
\begin{itemize}
\item[] \textbf {input:} theory $T$ and model $\mM \models T$

\item[] \textbf{output:}
  model $\mP \models T$ such that
  $\mP$ is a-minimal for $T$ and $\mN \homleq \mM$

\item[] \textbf{initialize:} set \mP to be \mM

\item[] \textbf{while}
  $T' \eqdef T \cup \set{\homTo{\mP}} \cup \set{\avoid{\mP}}$
  is satisfiable
    \begin{itemize}
    \item[] set \mP to be a model of $T'$
    \end{itemize}
  \item[] \textbf{return} \mP

\item[]
\end{itemize}
 \end{algo}
%% ============================================================

\subsubsection{Computing  a Set-of-Support}

This is another application of the $\avoid{\mA}$ technique.
Given theory $T$ and model \mA, if we construct the theory
$T' \eqdef T \cup \{\avoid{\mA} \}$ then calls to the SMT solver
on theory $T'$ are guaranteed to return models of $T$ outside the
hom-cone of \mA if any exist.   So a set-of-support for $T$ can be
generated by iterating this process.

Completeness of this strategy does not require that the models \mA we work with are minimal.
But if we do work with minimal models there will be fewer iterations.

When $\Sigma$ is a signature, a \emph{profile} for $\Sigma$ is a map
associating a positive integer with each uninterpreted sort of
$\Sigma$.
These numbers will be treated as upper bounds on the sizes of the sets
interpreting sorts in the models we construct.

\begin{algo}[SetOfSupport]\label{alg:us} \hfill
  \begin{itemize}
  \item[] \textbf {input:}  theory $T$ and profile \profile
  \item[] \textbf{output:} a stream
    $\mM_1, \mM_2, \dots$ %
    of minimal models  of $T$ such that %
    for any \profile-model $\mP \models T$, there is some $i$ such that %
    $\mM_i \homleq \mP$.
  \item [] \textbf{initialize:}
    set theory $T^*$ to be $T \cup \{ \prfbounds \}$

  \item[] \textbf{while}
    $T^*$ is satisfiable
    \begin{itemize}
    \item[] let \mM be  minimal model of $T^*$
    \item[] \textbf{output} \mM
    \item[] set $T^*$ to be $T^* \cup \avoid{\mM}$
    \end{itemize}
  \end{itemize}
\end{algo}

% Need to describe how to tweak this to generate a (possibly infinite)  set of
% support without the profile restriction.

\subsection{Working with an SMT Solver}
\label{sec:working-with-solver}

This section treats some of the practicalities of using  a SMT
solver as a tool to build models.

%% Doesn't really belong in a presentation of LPA per se
\begin{comment}
\subsubsection{Model Finding Goals}
If we are model-finding there are two things we care
about
\begin{enumerate}
\item getting a set of support for the space of all models of a theory,
  and
\item producing (only) minimal models
\end{enumerate}

But here's the first point:  there is a difference in status between
the goals of (i) producing a set of support, and
(ii) producing minimal models.
The first is a completeness criterion. At least in principal we want
to be able to assert that we compute a set-of-support for the user's
input theory. Otherwise we can't claim to be giving a complete
picture of the models.  Of course feasibility concerns will comprise
our realization of this goal but if we want to make principled claims
about the tool, this is the key completeness claim.

The second is really a user-interface criterion.  That's different
from a foundational goal.   I say the following should be sufficient:
\begin{itemize}
\item that every minimal model be shown to the user
\item that we be able to report whether a model is minimal or not
\item BUT if we (for feasibility reasons) end up showing the user some
  non-minimal models, this is not a fatal flaw
\end{itemize}

\end{comment}

\subsubsection{Getting a Model from the Solver}
\label{sec:getting-model-from}

Once the solver has determined that a theory $T$ is satisfiable, and
computed---internally---a model for $T$, the application must extract
the model from the solver.  But the API for doing this---in the
solvers we are familiar with---is quite restricted.  In any event,
SMT-Lib compliant solvers are not \emph{required} to make this process
particularly convenient.  %
Quoting from the SMT-Lib Standard (v.2.6) \cite{smtlib}
\begin{quote}\em
  The internal representation of the model A is not exposed by the
  solver. Similarly to an abstract data type, the model can be
  inspected only through the three commands below. As a consequence,
  it can even be partial internally and extended as needed in response
  to successive invocations of some of these commands.
\end{quote}
\newpage
The three commands alluded to are
\begin{itemize}
\item \textbf{get-value}, taking a list of closed quantifier-free terms and
  returning a corresponding sequence of terms designating
  \emph{values.}   The notion of ``value'' is theory-specific; for
  example, for the theory of arithmetic the values are the numerals.
\item \textbf{get-assignment}, a certain restricted version of get-value.
\item \textbf{get-model}, returning a list of definitions specifying the
  meanings of the user-defined function symbols.

  In this case the definitions are given in terms of the solver's
  internal representation of model-values.
\end{itemize}

This is inconvenient for us,  for several reasons.  First, for
an uninterpreted sort there is no theory-defined notion of value.
Second, since the solver might create only a partial model internally, Razor may not
have all the information it requires (for example for minimization).

To address this, we first ensure that the language we use to
communicate with the solver has enough ground terms at each sort to
name all elements of a model.  We expand on this point in
Section~\ref{sec:fresh-constants}.  Once this is done, we can query
the solver for the values of \emph{all} functions and predicates, see
Section~\ref{sec:querying}.  Finally, in
Section~\ref{sec:convergent-rep} we present a convenient data
structure for maintaining models.

\subsubsection{The Fresh-Constants Approach}
\label{sec:fresh-constants}

Suppose we have asked the solver to generate a model for theory $T$,
over a signature $\Sigma$.
We first build an enriched theory $T^{+}$ as follows.
\begin{enumerate}
\item  Determine a bound, at each uninterpreted sort $S$, on the number of
elements in the model(s) at the sort.

\item If the bound at sort $S$ is $n$,  add fresh constants %
$\{c-1, \dots, c_n\}$ to the signature, resulting in an expanded
signature $\Sigma^{+}$
\item Add to the theory a set of sentences, one for each uninterpreted sort $S$
  expressing the constraint that every element of sort $S$ is equal to
  one of the $c_i$.  Note that in a given model it may be the case
  that distinct constants name the same model element.
\end{enumerate}

Then the (bounded) models of the original theory are precisely the
reducts to $\Sigma$ of the models of $T^{+}$.

\subsubsection{Querying the Model}
\label{sec:querying}
Since everything is now named by a term (indeed, a constant) we can
work with the solver according to the standard, as follows.
Suppose the solver has determined $T^{+}$ to be satisfiable.
Here's what we do to scrape a model \mM out of the solver.

\begin{enumerate}
\item We know that every element of \mM is named by one of our
  canonical constants $c_i$.
\item First query the solver for (enough) answers to $\mM \models c_i =^? c_j$
  to get a set of representatives for the domains of \mM.
\item For each predicate $R$ and appropriate argument vector $\vec{c}$
  of representatives, query $\mM \models R(\vec{c})$.  %
  The solver will reply ``true'' or ``false'' and so this collection
  of queries defines the meaning of $R$ in the model.

\item For each function $f$ and  appropriate argument vector $\vec{c}$
  and possible answer $c$ of representatives, query $\mM \models
  f(\vec{c}) = c$. %
  The solver will reply ``true'' or ``false'' and so this collection
  of queries defines the meaning of $f$ in the model.

\end{enumerate}
% Each of those queries is in the signature of the theory given to the
% solver in the first place (the user sig + the razor-constants) and
% each returns a Boolean, so there's no nonsense about solver-generated values.

We can then build a ``basic'' model representation
\begin{align*}
  \text{equations} & \quad c_i = c_j \qquad \text{and} \\
  \text{equations} & \quad f \vec{c} = c \qquad \text{and} \\
  \text{facts} & \quad R \vec{c}
\end{align*}
where the $c_i$ range over the Razor constants.

An improved representation is given in the next section.

\subsubsection{Making the model representation convergent}
\label{sec:convergent-rep}
Suppose we put a total order $\succ$ on $K$, the Razor-generated
constants, and declare that for every $f, c$ and $c'$,
$f( \dots, c, \dots) \succ c'$.  We can then make a convergent ground
(terminating and confluent) rewrite system out of a model representation by
\begin{itemize}
\item
 turning each
$C$-equation and $D$-equation $s=t$ into a rewrite rule
$s \to t$ if $s \succ t$.
\item reducing each rule using the others, and
\item iterating this (since the process may create new rules).
\end{itemize}
This process is guaranteed to terminate, in a system of rules:
\begin{align*}
  \text{equations} & \quad f \vec{c} \to c \qquad \text{// $C$-rules} \\
  \text{equations} & \quad c_i \to c_j \qquad \text{// $D$-rules}
\end{align*}
which is ground convergent.

The $D$-equations make an equivalence relation
on the Razor constants.
The constants occurring on the right-hand sides make
a set of canonical representatives of \mM elements.

Since the system is self-reduced, ie, %
each right-hand side of a rule is irreducible by all the rules and
each left-hand side of a rule is irreducible by all the other rules,
the C-rules mention only the canonical representatives.

Also, these are precisely the set of constants that occur in any of the $C$-equations.

%% %%%%%%%%%%%%%%%%%%%%%%%%%%%%%%%%%%%%%%%%%%%%%%%%%%

\subsection{Alternative Approaches}

In this section we describe alternative approaches to the problems of
model extraction and $a$-minimization.
% In \refsec{sec:evaluation}
% we compare the performance of these approaches.

There are actually two distinct problems addressed here:
\begin{itemize}
\item getting a model out of
  the solver, and
\item $a$-minimization
\end{itemize}

They are described together here because the solutions have been
implemented together in the same version of Razor, and it is easier to
present them together.

\subsubsection{The Enumerated Types Approach to Model Extraction}
\label{sec:enumerated-types}

We describe an an alternative approach to building models, which was
in fact the first method we implemented.  The idea arose as a solution to
problems described in Section~\ref{sec:getting-model-from} for obtaining
a complete representation of a model once the solver has determined
that a theory is satisfiable.

One solution to this problem was described in
Section~\ref{sec:fresh-constants}.  Another is described here, as
Algorithm ET-First.  Roughly speaking, we translate uninterpreted
sorts into enumerated sorts, thereby ensuring that every element of
the sort is named by a term denoting a \emph{value}.   We can then use
the get-value solver function described above to query the solver.

\subsubsection{Another Exhaustive Search Approach to $a$-Minimization}
\label{sec:another-approach-a}

We also show here an approach to $a$-minimization different from
Algorithm~\ref{alg:a-min}.  This algorithm, Algorithm (ET Last),
starts with a given model \mM, and computes a sequence of $i$-minimal
models, each one constructed to be outside the $a$-cone of the
previous.  If and when a final such model \mK is reached, it is
guaranteed to be $a$-minimal.  Since: to say that \mK is a final model
in the sequence is to say that there are no models of $T$ outside the
cone of \mK; and if \mP were strictly below \mK in the $\homleq$
ordering then \mP would violate that property.
If $T$ is a theory with a finite set of support,
this technique is guaranteed to compute an $a$-minimal model.

The use of $i$-minimal models in this method is not essential for its
correctness, but we do this for efficiency: $i$-minimization enlarges
a model's avoid-cone (even though the cone is defined in terms of
$a$-homomorphisms) and so decreases the numer of iterations required
before arriving at a final model.

\begin{algo}[Enumerated Types]\label{alg:et} \hfill
  \begin{itemize}
  \item[] \textbf{input:} theory $T$
  \item[] \textbf{output:} a stream of minimal models comprising a set of
    support for $T$
  \end{itemize}
  \pinch
  \qquad $\cn{ET}(T){}\equiv{}$
  \begin{listalg}
  \item $\mM\gets\cn{ET\_first}(T)$.
  \item If $T$ is unsatisfiable, return.
  \item $\mN\gets\cn{ET\_last}(T,\seq{\mM})$.
  \item Output $\mN$.
  \item $T'\gets T\cup\cn{avoid\_cone}(\mN)$.
  \item $\cn{ET}(T')$.
  \end{listalg}
\end{algo}

\begin{algo}[ET First]\label{alg:et-first} \hfill
  \begin{itemize}
  \item[] \textbf{input:} theory $T$
  \item[] \textbf{output:} a model of $T$ or \cnc{UNSAT}
  \end{itemize}
 \qquad $\cn{ET\_first}(T){}\equiv{}$
  \begin{listalg}
  \item If $T$ is unsatisfiable, return \cnc{UNSAT}.
  \item Extract domain~$D$ from prover.
  \item Replace the uninterpreted sorts in~$T$ with enumerated types
    to form theory~$T'$.  The domain~$D$ is used to determine the
    number of scalar constants within each enumerated type.
  \item Ensure $T'$ is satisfiable.
  \item $\mM\gets\cn{get\_model}(T')$.
  \item Return an $i$-minimization of \mM (cf. Algorithm~\ref{alg:imin}).
  \end{listalg}
\end{algo}

\begin{algo}[ET Last]\label{alg:et-last} \hfill
  \begin{itemize}
  \item[] \textbf{input:} theory $T$
  \item[] \textbf{output:} a set of minimal models comprising a set of
    support for $T$
  \end{itemize}
  \pinch
  \qquad $\cn{ET\_last}($T$){}\equiv{}$
  \begin{listalg}
  \item $T'\gets T\cup\{\cn{avoid\_cone}(\mM)\mid\mM\in M\}$.
  \item $\mN\gets\cn{ET\_first}(T')$.
  \item If $T'$ is unsatisfiable, return $\cn{head}(M)$.
  \item $N\gets{}$ filter out of $M$ models with a homomorphism from~$\mN$.
  \item Return $\cn{ET\_last}(T,\seq{\mN}\append N)$.
  \end{listalg}
\end{algo}

The \cn{avoid\_cone} function returns the avoid cone associated with a
simplified version of the model.  Simplification is crucial for
performance reasons.

\[\avoid{\mM} \eqdef \neg \homFrom{\mM}\]

% [@@ BTW, the algorithm ET-Min here was
% the same algorithm as Algorithm~\ref{alg:imin}, so I commented out
% ET-Min and referred to  Algorithm~\ref{alg:imin}.  -dd]

% @@@@@@@@ this is the same algorithm as \ref{alg:imin}, so no need to
% repeat it

% \begin{algo}[ET Min]\label{alg:et}  \hfill
%   \vskip2ex
%   \noindent $\cn{ET\_min}(T\typ\mbox{theory},
%   \mM\typ\mbox{model})\typ\mbox{model}{}\equiv{}$
%   \begin{listalg}
%   \item $T'\gets T\cup{}$ reduction axioms of~$\mM$.
%   \item If $T'$ is unsatisfiable, return~$\mM$.
%   \item $\mN\gets\cn{get\_model}(T')$.
%   \item Return $\cn{ET\_min}(T,\mN)$.
%   \end{listalg}

% \end{algo}

%%% Local Variables:
%%% mode: latex
%%% TeX-master: "lpa"
%%% End:

\subsection{Results}\label{sec:results}

Early implementations of {\lpa} had errors.  At each step in the
{\lpa} algorithm, the model finder must produce a model that describes
a skeleton.  Early implementations were node-oriented, and it took a
while to identify the correct skeleton axioms.  A missing axiom meant
that the skeleton extracted from model finder output was not accepted
by {\cpsa}.  The axioms are now finely tuned to assert only what is
needed to obtain valid skeletons, however, notice that there is a
fairly large number of universally quantified formulas that make up a
protocol theory.

The performance of early implementations of {\lpa} was miserable.  It
was immediately clear that efficient model finding is essential,
motivating the extensive work described in this chapter.  The first
viable algorithm developed (Algorithm~\ref{alg:et}) is called ET for
Enumerated Types, and the second one (Algorithm~\ref{alg:us}) is
called US for Uninterpreted Sorts.

Early work with the ET algorithm showed the performance advantage of
using a strand-oriented goal language, but throughout, we retained the
capability to use a node-oriented language.

A huge performance boost came to the ET algorithm by compressing the
representation of a model a described in Algorithm~\ref{alg:hom
  from}.  The US algorithm always compressed representations.

\begin{table}
  \begin{center}
  \begin{tabular}{|l|rc|rc|}\hline
    \multicolumn{1}{|c}{Test}&\multicolumn{2}{|c|}{ET}
    &\multicolumn{2}{|c|}{US}\\ \cline{2-5}
    \multicolumn{1}{|c|}{Name}&Time&Sat Cks&Time&Sat Cks\\ \hline
    Needham-Schroeder (NO)&1.67&14 14&1.67&8 7\\
    Needham-Schroeder (SO)&1.43&34 12&1.26&9 8\\
    Reflect (NO)&2.82&11 32&1.89&6 9\\
    Reflect (SO)&1.45&10 30&0.56&5 7\\
    DoorSEP (SO)&4.32&26 60 37&4.92&9 10 24\\ \hline
  \end{tabular}
  \end{center}
  \caption{{\lpa} Results}\label{tbl:lpa results}
\end{table}

Intuition suggests that the US algorithm will outperform the ET
algorithm, and tests bear out that intuition.  For each test of
{\lpa}, we collected the \textsc{cpu} runtime in seconds, and for each
invocation of the model checker, we recorded the number of
satisfaction checks requested of Z3.

Table~\ref{tbl:lpa results} shows the results of running {\lpa} on
three protocols: Needham-Schroeder, Reflect, and DoorSEP\@.  Reflect
is the simplest protocol we could come up with that has two shapes.
See Figure~\ref{fig:reflect}.  NO indicates the use of the
node-oriented language and SO is for the strand-oriented language.
Notice that US always makes far fewer satisfaction checks, and tends
to run faster than ET\@.  DoorSEP with the node-oriented language does
not terminate in a reasonable amount of time.  Analyzing DoorSEP
requires three invocations of the model checker.  It is the third
invocation of the model checker that fails to terminate, however the
performance of the second invocation of the model checkers shows the
superiority of the US algorithm. It took 1.65 seconds as opposed to
54.21 seconds used by ET\@.  For the equivalent strand-oriented
problem, it took the US algorithm 1.11 seconds as opposed to 3.41
seconds used by ET\@.  Thus ET was 16 times faster solving the
strand-oriented problem.

Attempts to use {\lpa} on larger protocols proved futile.  Larger
protocols have a larger set of skeleton axioms.  When Z3 looks for a
model, it must instantiate each universally quantified variable with
each element in its domain.  The result is an exponential growth in
the resources used by Z3 to find a model.  Our results show that
na\"ive model finding modulo strand space theory is only viable for
very small protocols.

\begin{figure}
  \[\begin{array}[c]{c@{\qquad\qquad}c}
  \mbox{init}&\mbox{resp}\\[2ex]
  \xymatrix@C=6em@R=2.3ex{
    \bullet\ar@{=>}[d]\ar[r]^{\enc{B}{\iv{A}}}&\\
    \bullet&\ar[l]_{\enc{A}{\iv{B}}}}
  &
  \xymatrix@C=6em@R=2.3ex{
    \ar[r]^{\enc{B}{\iv{A}}}&\bullet\ar@{=>}[d]\\
    &\bullet\ar[l]_{\enc{A}{\iv{B}}}}
  \end{array}\]

  \begin{center}
    Point of view: responder receives $\enc{B}{\iv{A}}$\\
    $\iv{A}$ and $\iv{B}$ are uncompromised
  \end{center}
  \[\fbox{\xymatrix@C=4em@R=2.3ex{
    \txt{\strut resp}&\txt{\strut init}\\
    \bullet&\bullet\ar[l]_{\enc{B}{\iv{A}}}}}\]

  \[\fbox{\xymatrix@C=4em@R=2.3ex{
      \txt{\strut resp}&\txt{\strut resp}&\txt{\strut init}\\
    &\bullet\ar@{=>}[d]&\bullet\ar[l]_{\enc{A}{\iv{B}}}\\
    \bullet&\bullet\ar[l]_{\enc{B}{\iv{A}}}}}\]
  \caption{Reflect Protocol and Shapes}\label{fig:reflect}
\end{figure}

\section{Conclusion}\label{sec:concl}

In this paper, we have studied the mechanisms needed for finding
minimal models efficiently in the preorder of all homomorphisms, or in
the preorder of embeddings.  We have described an implementation of
these methods in Razor that orchestrates Z3 to compute the models.
Moreover, we have shown how to share labor between Razor and {\cpsa}
so that the latter can apply its authentication test solving methods,
while Razor is handling the remainder of the axiomatic theory of the
protocol together with some non-protocol axioms.

The project explored several algorithms for finding minimal models.  A
significant improvement was described in
Section~\ref{hom-from-and-to}, which explains how to optimize the
construction of the sentence $\avoid{\mA}$ that characterizes the cone
of models to avoid a given model.  Tests show that the
Uninterpreted Sorts algorithm (Algorithm~\ref{alg:us}) outperforms
all others.  This algorithm is implemented in a program that can be
used outside the {\lpa} framework.

The project identified an axiomatic theory for each protocol that is
finely tuned so as to allow Z3 and {\cpsa} to communicate.  The theory
makes it so that well-formed skeletons can be extracted from Razor
models and given to {\cpsa}.  With these theories, we successfully
analyzed the DoorSEP protocol which includes a trust axiom.
Unfortunately, as the size of a protocol grows, so does the size of
its theory, and especially its number of universally quantified
variables.  Z3 becomes very slow when given a large theory.  Thus we
found that the {\lpa} architecture, as currently implemented, cannot
scale to handle nearly all problems of interest.

In future work, we would like to reorganize the software architecture
as well as the selection of logical theories to deliver to the
components.  Z3 is reduced to a molasses-like consistency when given
reasonably large domain sizes and a theory with as many universal
quantifiers as appear in our protocol theories.  This motivates an
architecture in which only subtheories are delivered to Z3, preferably
governing smaller parts of the domain.

Thus, one would like to do more reasoning locally with a successor to
{\cpsa}.  Part of this reasoning can take the same form as generating
the current cohorts, i.e.~applying authentication test-like reasoning
to handle authentic and secure channels, mutable global state, and to
apply security goals ascertained in previous runs.  Moreover, explicit
logical axioms may also be handled in the same way, when they are in
the form of geometric sequents:
\[
  \Phi \limp \bigvee_{i\in I} \exists \overline{y_i} \qdot \Psi_i ,
\]
where $\Phi$ and the $\Psi_i$ are conjunctions of atomic formulas, and
particularly in the favorable case that the index set $I$ is either a
singleton or else the empty set $I=\emptyset$.  In that favorable
case, the inference does not require a case split, but only adding
information.  In these situations, a {\cpsa}-like program can
certainly saturate its skeleton-like partial models.  It can then call
out to Razor to obtain minimal models of portions of the theory that
involve modest domains and limited numbers of nested universal
quantifiers.

\iffalse
\emph{Concluding text goes here.  \\
  \phantom{mmm}
  %
  Dan says: the intro seems to me to hit the high points just fine, so
  I don't see a need to do more than rephrase the contributions here,
  \emph{except} to say this:  a straightforward construction of the
  $\avoid{\mA}$ sentence can give a sentence that taxes the SMT
  solver, and one of our contributions (in
  Section~\ref{hom-from-and-to}) was to optimize that construction.
  This remark might be a distraction in the intro, which is why I
  don't push for making the point there.}
%
\fi

%%% Local Variables:
%%% mode: latex
%%% TeX-master: "a-min"
%%% End:

\bibliographystyle{alpha}% the recommended bibstyle
%\bibliography{secureprotocols}
% Why put in this explicit path?  The above works with the Makefile.
\bibliography{../../inputs/secureprotocols}

\end{document}